\tikzset{
    myarrow/.style={-{Triangle[length=3mm,width=1mm]}}
}
\definecolor{darkpastelgreen}{rgb}{0.01, 0.75, 0.24}
\providecommand{\customgenericname}{}
\newcommand{\newcustomtheorem}[2]{%
  \newenvironment{#1}[1]
  {%
   \renewcommand\customgenericname{#2}%
   \renewcommand\theinnercustomgeneric{##1}%
   \innercustomgeneric
  }
  {\endinnercustomgeneric}
}
\newtheorem{theorem}{Theorem}
\newtheorem{corollary}[theorem]{Corollary}
\newtheorem{lemma}[theorem]{Lemma}
\DeclarePairedDelimiterX{\innerprod}[2]{\langle}{\rangle}{#1, #2}
\newcommand{\norm}[2]{\left\| #1 \right\|_{#2}}
\newcommand{\grad}{\nabla}
\newcommand{\R}{\mathbb{R}}
\newcommand{\C}{\mathbb{C}}
\newcommand{\E}{\mathbb{E}}
\newcommand{\Cov}{\mathrm{Cov}}
\newcommand{\mc}[1]{\mathcal{#1}}
\newcommand{\mbf}[1]{\mathbf{#1}}
\newcommand{\mbx}{\mathbf{x}}
\newcommand{\mby}{\mathbf{y}}
\newcommand{\mbA}{\mathbf{A}}
\newcommand{\mbC}{\mathbf{C}}
\newcommand{\tweedie}{\bm{\mu}_{0 \mid t}}
\newcommand{\Ours}{FGPS}
\newcommand{\score}[1]{\grad_{\mbx_t} \log #1}
\algrenewcommand\algorithmiccomment[1]{\hfill \textcolor{gray}{// #1}}
\definecolor{iccvblue}{rgb}{0.21,0.49,0.74}
\title{Frequency-Guided Posterior Sampling for Diffusion-Based Image Restoration}
\author{Darshan Thaker \\
University of Pennsylvania \\
{\tt\small dbthaker@seas.upenn.edu}
\and
Abhishek Goyal \\
University of Pennsylvania \\
{\tt\small abhi2358@seas.upenn.edu}
\and
Ren\'e Vidal \\
University of Pennsylvania \\
{\tt\small vidalr@seas.upenn.edu}
}
\begin{document}
\maketitle
\begin{abstract}
    Image restoration aims to recover high-quality images from degraded observations. When the degradation process is known, the recovery problem can be formulated as an inverse problem, and in a Bayesian context, the goal is to sample a clean reconstruction given the degraded observation. Recently, modern pretrained diffusion models have been used for image restoration by modifying their sampling procedure to account for the degradation process. However, these methods often rely on certain approximations that can lead to significant errors and compromised sample quality. In this paper, we propose a simple modification to existing diffusion-based restoration methods that exploits the frequency structure of the reverse diffusion process. Specifically, our approach, denoted as Frequency Guided Posterior Sampling (FGPS), introduces a time-varying low-pass filter in the frequency domain of the measurements, progressively incorporating higher frequencies during the restoration process. We provide the first rigorous analysis of the approximation error of FGPS for linear inverse problems under distributional assumptions on the space of natural images, demonstrating cases where previous works can fail dramatically. On real-world data, we develop an adaptive curriculum for our method's frequency schedule based on the underlying data distribution. FGPS significantly improves performance on challenging image restoration tasks including motion deblurring and image dehazing. The source code is available at \url{https://github.com/darshanthaker/FGPS}.
\end{abstract}    
\section{Introduction}
\label{sec:intro}

Image restoration is a fundamental problem in computer vision whose goal is to recover high-quality images from degraded observations. Many image restoration problems can be cast as inverse problems, where the task is to recover an underlying signal $\mbx_0 \in \R^n$ from noisy measurements $\mby \in \R^m$ that are obtained from $\mbx_0$ via
\begin{equation} \label{eq:forward}
    \mby = \mc{A}(\mbx_0) + \mbf{z}.
\end{equation}
In the above, $\mc{A}: \R^n \to \R^m$ denotes the forward operator, which we assume to be known, and $\mbf{z} \in \R^m$ is additive Gaussian noise.  
In most applications, recovering $\mbx_0$ is difficult because the forward operator is a many-to-one mapping. To address this challenge, researchers incorporate prior knowledge about $\mbx_0$, including sparsity \cite{candes2006stable, donoho2005stable}, low-rank structure \cite{candes2011robust, candes2012exact}, total variation \cite{rudin1992nonlinear}, or deep generative priors \cite{pan2021exploiting, bora2017compressed}. One way of enforcing this knowledge is by solving a regularized optimization problem where the objective is to find an $\mbx_0$ that maximizes the likelihood under the chosen prior model while satisfying the constraint $\mby = \mc{A}(\mbx_0)$. Another approach is to follow the Bayesian viewpoint where instead of recovering an exact $\mbx_0$, we sample from $p(\mbx \mid \mby)$ where \Cref{eq:forward} is the model for $p(\mby \mid \mbx_0)$ \cite{stuart2010inverse}. The advantage of this interpretation is that we can draw multiple plausible samples. In image restoration tasks, the diversity of samples can help the user pick a visually appealing reconstruction.

Recently, diffusion models \cite{song2019generative, ho2020denoising} have shown tremendous success in modeling complex probability distributions, allowing one to generate diverse and visually coherent samples in image, video, and audio tasks \cite{dhariwal2021diffusion,ho2022video,kong2020diffwave}. During training, diffusion models learn to reverse a gradual noising process that transforms data $\mbx_0$ to pure noise $\mbx_T$ over $T$ steps, where $\mbx_t$ denotes an intermediate step of the reverse diffusion process. At test time, they generate samples by iteratively denoising random noise, progressively refining it into coherent data. Naturally, these models can also be used to develop samplers for $p(\mbx \mid \mby)$ in inverse problem settings. However, while one can train a problem-specific diffusion model to sample directly from $p(\mbx \mid \mby)$ \cite{saharia2022palette}, this requires training new diffusion models for every inverse problem, limiting the practical use of such approaches. 

A popular approach which does not require training a new diffusion model for $p(\mbx \mid \mby)$ is to modify the sampling process of a pretrained unconditional diffusion model so that it produces samples that are consistent with the given measurement $\mby$ \cite{chung2022diffusion, kawar2022denoising, song2020score, song2023pseudoinverse, wang2022zero}. These methods are based on decomposing the conditional score function $\grad_{\mbx_t} \log p(\mbx_t \mid \mby)$ as $\grad_{\mbx_t} \log p(\mbx_t) + \grad_{\mbx_t} \log p(\mby \mid \mbx_t)$ using the Bayes rule. While the first term is learned by a pretrained unconditional diffusion model, the second term is analytically intractable in general because $\mbx_t$ is the result of a complex denoising process. Thus, in practice, we use approximations for this term, typically by first computing $\bm{\mu}_{0 \mid t} = \E[\mbx_0 \mid \mbx_t]$ using the learned diffusion model, known as the Tweedie estimate \cite{efron2011tweedie}, and following the gradient of a loss that measures the deviation between $\mc{A}(\bm{\mu}_{0 \mid t})$ and $\mby$. While intuitive, the approximation error from the Tweedie estimate is difficult to control. Previous work has attempted to address this by designing heuristics that ensure that the modifications to $\mbx_t$ respect the intermediate data manifold learned by the diffusion model. However, these approaches make strong assumptions about the data manifold's structure, such as linearity, or require that $\tweedie$ remains on the clean data manifold throughout all iterations \cite{chung2022improving, yang2024guidance}.

Our work aims to better understand and control this approximation error by analyzing images in the frequency domain. Our main contributions are:

\begin{enumerate}
    \item We develop a new unsupervised diffusion-based image restoration algorithm, denoted as \textbf{F}requency \textbf{G}uided \textbf{P}osterior \textbf{S}ampling (\Ours). Our algorithm leverages the observation that diffusion models generate images hierarchically in the frequency domain \cite{rissanen2022generative}. Thus, to control the approximation error from estimating $\grad_{\mbx_t} \log p(\mby \mid \mbx_t)$, we apply a time-varying low-pass filter to $\mby$ which progressively includes higher frequency ranges during the reverse diffusion process.
    
    \item We then perform a precise theoretical analysis of the approximation errors in posterior sampling methods for linear inverse problems under distributional assumptions on the data. Next, exploiting the frequency distribution of natural images, we give precise examples of forward operators where FGPS has small approximation error compared to prior work.

    \item In practice, we develop a data-dependent curriculum for the low-pass filters which allows the reverse process to adapt to spectral properties of the data distribution. While being easy to implement, our algorithm leads to a significant boost in performance on challenging linear and nonlinear image restoration problems, such as motion deblurring and image dehazing. 
\end{enumerate}

\section{Background}
\label{sec:background}

\subsection{Diffusion Models} \label{sec:background_dm}
Diffusion models are a class of score-based generative models that sample from a given data distribution by learning to reverse a gradual noising process \cite{ho2020denoising, song2019generative}. This noising process, known as the forward process, is represented in continuous time as an It\^{o} Stochastic Differential Equation (SDE) for $t \in [0, 1]$
\begin{equation} \label{eq:forward_sde}
        d \mbx_t = -\frac{1}{2} \beta(t) \mbx_t dt + \sqrt{\beta(t)} d \mbf{w}_t. 
\end{equation}
The SDE is characterized by its drift coefficient $-\frac{1}{2} \beta(t)$ and diffusion coefficient $\sqrt{\beta(t)}$, with $\mbf{w}_t$ representing the standard Wiener process over the interval $[0, 1]$. The initial condition of this SDE is a datapoint $\mbx_0 \sim p(\mbx)$. The drift and diffusion coefficients are chosen such that $\mbx_t$ goes from the clean data distribution at $t = 0$ to pure noise $\mbx_1 \sim \mc{N}(\mbf{0}, \mbf{I})$ at $t = 1$. A special case is known as the Variance-Preserving SDE (VP-SDE), where we can write the transition density $p_t (\mbx_t \mid \mbx_0)$ as
\begin{equation} \label{eq:forward_process}
    p(\mbx_t \mid \mbx_0) = \mc{N}(\sqrt{\bar{\alpha}_t} \mbx_0, (1 - \bar{\alpha}_t) \mathbf{I})\,,
\end{equation}
for $\bar{\alpha}_t = e^{ \int_0^t \beta(s) \ ds}$ \cite{ho2020denoising}. 

The reverse process aims to remove the noise from $\mbx_1$ and draw a sample from $\mbx_0$ following the reverse-time SDE
\begin{equation} \label{eq:reverse_sde}
    d \mbx_t = \left[-\frac{\beta(t)}{2} \mbx_t - \beta(t) \grad_{\mbx_t} \log p_t(\mbx_t) \right] dt + \sqrt{\beta(t)} d \bar{\mathbf{w}}_t\,,
\end{equation}
where $t$ and the new Wiener process $\bar{\mathbf{w}}_t$ both run in reverse. A seminal result of Anderson gives that this reverse process has the same probability distribution as the forward process for every $t$ \cite{anderson1982reverse}. In order to estimate the time-dependent score function $\grad_{\mbx_t} \log p_t(\mbx_t)$, diffusion models use the denoising score matching technique to train a neural network $s_\theta(\cdot)$ that predicts the noise added to a datapoint, which approximates the score function \cite{vincent2011connection}. Sampling from $p(\mbx)$ is performed by solving a discrete-time version of \Cref{eq:reverse_sde} over $T$ steps, where $\mbx_T \sim \mc{N}(\mbf{0}, \mbf{I})$ \cite{ho2020denoising,song2020score}. 

\subsection{Posterior Sampling} \label{sec:background_ps}

In inverse problems, we are given measurements $\mby$ as in \Cref{eq:forward}, and the goal is to recover the underlying clean $\mbx_0$ given $\mby$. Throughout the paper, we will refer to $\mc{A}$ as the known forward operator. Because $\mc{A}$ is not invertible in general, the inverse problem can be highly ill-posed, necessitating data modelling assumptions. 

A popular Bayesian strategy to solve the problem is to develop samplers for $p(\mbx \mid \mby)$ assuming we have a prior for $p(\mbx)$ modeled by a pretrained generative model such as a diffusion model. For instance we can extend \Cref{eq:reverse_sde} to sample from $p(\mbx \mid \mby)$ by using a conditional score function
\begin{equation} \label{eq:reverse_cond_sde}
    d \mbx_t = \left[-\frac{\beta(t)}{2} \mbx_t - \beta(t) \grad_{\mbx_t} \log p_t(\mbx_t \mid \mby) \right] dt + \sqrt{\beta(t)} d \bar{\mathbf{w}}_t.
\end{equation}
While one could train a conditional diffusion model to approximate the conditional score, this would require retraining models for every inverse problem we wish to solve \cite{saharia2022palette}. Instead, we use Bayes rule to write the conditional score as
\begin{equation} \label{eq:bayes_rule}
     \grad_{\mbx_t} \log p(\mbx_t \mid \mby) = \underbrace{\grad_{\mbx_t} \log p(\mbx_t)}_{\text{unconditional score}} + \underbrace{\grad_{\mbx_t} \log p(\mby \mid \mbx_t)}_{\text{noisy likelihood score}}
\end{equation}
While a pretrained diffusion model effectively approximates the unconditional score, the noisy likelihood score remains intractable due to its factorization
\begin{equation} \label{eq:noisy_factorization}
    p(\mby \mid \mbx_t) = \int p(\mby \mid \mbx_0) p(\mbx_0 \mid \mbx_t) d \mbx_0,
\end{equation}
where $p(\mbx_0 \mid \mbx_t)$ represents the complex distribution of denoised estimates of $\mbx_t$. Even though $p(\mbx_0 \mid \mbx_t)$ can be complex, the Tweedie's estimate gives us the posterior mean $\tweedie = \E[ \mbx_0 \mid \mbx_t]$ as a function of the unconditional score function \cite{efron2011tweedie} (refer to Appendix \ref{sec:app:background} for more background). Thus, some unsupervised methods approximate $p(\mbx_0 \mid \mbx_t)$ using $\tweedie$, for example as a Dirac delta around $\tweedie$ \cite{chung2022diffusion} or an isotropic Gaussian distribution with mean $\tweedie$ \cite{song2023pseudoinverse}. This results in tractable approximations for the conditional score $\grad_{\mbx_t} \log p(\mbx_t \mid \mby)$, but introduces a potentially large approximation error. In this work, we propose a novel approximation for the conditional score and quantify its approximation gap exactly under distributional assumptions on $\mbx_0$.

\subsection{Signal Processing Basics}

Our method relies on the frequency decomposition of natural images, so in this section, we review the terminology used for the remainder of the paper. The Discrete Fourier Transform (DFT) of a finite signal $\mbx \in \R^n$, denoted as $\mc{F}(\mbx)$, maps the signal to the frequency domain. At each DFT sample frequency $f_k = \frac{k}{n}$ for $k = 1, \dots, n - 1$, the DFT value $\mc{F}(\mbx)[f_k]$ represents the complex-valued frequency component in the signal. We define the signal's \textit{power spectral density} as the expected squared magnitude (the power) of the signal at different sample frequencies
\begin{equation} \label{eq:psd}
    S(f_k) = \E_\mbx \left[\frac{| \mc{F}(\mbx)[f_k] |^2}{n} \right].
\end{equation}
When $S(f_k) = c |f_k|^{-\beta}$ for constants $c, \beta > 0$, we say that the signal follows a power law in the frequency domain with parameters $c, \beta$. 

\section{Related Work}
\label{sec:related_work}

Inverse problems of the form in \Cref{eq:forward} have been solved for many years with different priors, such as sparse/low-rank priors \cite{candes2006stable, candes2011robust, candes2012exact} or generative priors \cite{bora2017compressed, pan2021exploiting}. In this work, we focus on posterior sampling methods that use pretrained diffusion models and approximate the conditional score to perform guidance \cite{chung2022diffusion, song2023pseudoinverse, wang2022zero, rozet2024learning, song2020score, choi2021ilvr, ho2022classifier}. These have been applied to inverse problems such as image deblurring, super-resolution \cite{daniels2021score}, and image inpainting \cite{lugmayr2022repaint, chung2022diffusion}. However, they can incur significant approximation errors as we will illustrate in \Cref{sec:limitations}. Works such as \cite{chung2022improving, yang2024guidance} propose methods to mitigate these approximation errors, but rely on strong assumptions on the data manifold. In this work, we use a more practical assumption that the data obeys a power law in the frequency domain \cite{van1996modelling} in order to develop a new approximation. 

A line of work has studied diffusion models in a transformed space to perform guidance for inverse problems \cite{kawar2021snips, kawar2022denoising}. This is in fact closely related to our work as our work is akin to performing diffusion in the Fourier domain instead, similar to \cite{lv2024fourier}. Building on recent connections between diffusion models and spectral autoregression \cite{dieleman2024spectral}, our work differs from previous approaches \cite{song2020score, choi2021ilvr} by using a frequency-guided measurement schedule rather than one based on the unconditional diffusion model's variance schedule. We refer the reader to \cite{daras2024survey} for a comprehensive survey on using diffusion models for inverse problems.

\section{Frequency Guided Posterior Sampling} \label{sec:method}

As illustrated in Section \ref{sec:background_ps}, the main challenge in developing unsupervised inverse problem solvers is estimating the noisy likelihood score $\score{p(\mby \mid \mbx_t)}$, which depends on the intractable denoising distribution $p(\mbx_0 \mid \mbx_t)$. It is common to substitute this denoising distribution with an approximation that depends only on the posterior mean $\tweedie = \E[\mbx_0 \mid \mbx_t]$, which is tractable using Tweedie's formula. For example, Diffusion Posterior Sampling (DPS) approximates the noisy likelihood score with $\score{p(\mby \mid \tweedie)}$ \cite{chung2022diffusion}. However, these methods can incur a significant approximation error by using simple distributions centered around the posterior mean instead of the true distribution.  

The key challenge is ensuring that $\mbx_t$ stays on the data manifold learned by the diffusion model even after it is updated using an approximation of the gradient $\score{p(\mby \mid \mbx_t)}$. Prior approaches address this by assuming that the data manifold is linear or that $\tweedie$ remains on the clean data manifold \cite{chung2022improving, yang2024guidance}. However, these assumptions are often violated in practice. Our main insight is to use the observation that diffusion models generate images hierarchically. Specifically, recent work has connected diffusion models and inverse heat dissipation, concluding that diffusion models naturally generate images from low frequencies (corresponding to coarse features) to high frequencies (corresponding to fine-grained features) \cite{rissanen2022generative} \footnote{This effect is because the forward process noise schedule removes high frequency components proportional to the variance of the noise.}. If there is a mismatch in frequency content between $\tweedie$ and the measurement $\mby$, updating $\mbx_t$ by following an approximation of $\score{p(\mby \mid \mbx_t)}$ can move $\mbx_t$ off the learned data manifold. This problem is exacerbated during the early stages of the reverse process since the early stages play a crucial role in ensuring the generated output matches the fundamental low-frequency structure of the measurements.



We propose a simple fix to this issue by retaining only low frequency components of the measurements at the initial steps of the reverse process and progressively adding high frequency components in later steps. To implement this, we convolve the original measurement $\mby$ with a time-varying low-pass filter $\mbf{k}_t$. Denoting $\phi_t(\mby)$ as this convolution, the approximation we propose for the noisy likelihood score is 
\begin{equation}
    \score{p(\mby \mid \mbx_t)} \approx \score{p(\phi_t(\mby) \mid \tweedie)}\\,
\end{equation}
where $\tweedie$ denotes the Tweedie estimate $\E[\mbx_0 \mid \mbx_t]$ predicted by the diffusion model. In order to compute the above approximation, we need to compute the model likelihood for the new measurements. Since convolution of a signal $\mby$ with a low pass filter can be represented as a circulant matrix multiplied by $\mby$, we have that $\phi_t(\mby) = \mbC_t \mby$ where $\mbC_t$ is a time-dependent circulant matrix representing $\mbf{k}_t$. This circulant matrix has its first row as the flattened $\mbf{k}_t$ and all other rows as cyclic shifts. Thus, using properties of the multivariate normal distribution, we have that the updated model likelihood is 
\begin{equation} \label{eq:new_likelihood}
    p(\phi_t(\mby) \mid \mbx_0) = \mc{N}(\mbC_t \mc{A}(\mbx_0), \sigma_y^2 \mbC_t\mbC_t^T)\\.
\end{equation}
Given this likelihood, our approximation of the noisy likelihood score becomes
\begin{equation}
    p(\phi_t(\mby) \mid \tweedie) = \mc{N}(\mbC_t \mc{A}(\tweedie), \sigma_y^2 \mbC_t\mbC_t^T)\\, 
\end{equation}
and its corresponding gradient is 
\begin{equation} \label{eq:our_score_approx}
    \score{p(\phi_t(\mby) \mid \tweedie)} = \mbf{S}_t \grad_{\mbx_t} \norm{\mbC_t\mby - \mbC_t \mc{A}(\tweedie)}{2}^2\\, 
\end{equation}
where $\mbf{S}_t = (\sigma_y^2 \mbC_t \mbC_t^T)^{-1}$. Since $\tweedie$ can be computed using one network evaluation of the diffusion model, its gradient with respect to $\mbx_t$ can be computed via backpropagation, as in DPS. Alternating unconditional sampling and conditional sampling using their corresponding score functions gives us our algorithm, which we call \textbf{F}requency \textbf{G}uided \textbf{P}osterior \textbf{S}ampling (\Ours) outlined in Algorithm \ref{alg:ours}.

\noindent \textbf{Dataset-informed choice of $\mbf{k}_t$}. An important choice for our method is choosing the sequence $\{\mbf{k}_t\}_{t = 1}^T$. Each $\mbf{k}_t$ is a low-pass filter that should allow frequencies up till a threshold $\tau_t$ and then strongly attenuate any frequencies above $\tau_t$. In order to encode that the initial steps of the reverse process should only use low frequency components of $\mby$, we require that $\tau_T$ should start off small and increase as $t$ decreases. We denote the sequence of $\tau_t$ as our frequency curriculum. As datasets can have varying frequency characteristics, our method gives flexibility to set the frequency curriculum in a data-dependent way, as explained in \Cref{sec:experiments}. 

\noindent \textbf{Efficient Implementation.} We efficiently compute $\phi_t(\cdot)$ in the Fourier domain. Specifically, $\mbC_t$ can be diagonalized as $\mbC_t = \mbf{F^{-1} M F}$, where $\mbf{F}$ denotes the Discrete Fourier Transform matrix \cite{treiber2013optimization}. Using this property along with the fact that $\mbf{F}$ is an orthogonal matrix with determinant $1$, we have that $p(\phi_t(\mby) \mid \tweedie) = p(\mbf{F} \phi_t(\mby) \mid \tweedie) = p(\mbf{M F y} \mid \tweedie)$. Thus, we bypass storing the large circulant matrix and compute $\phi_t(\mby)$ and $\phi_t(\mc{A}(\tweedie))$ by first taking the Fast Fourier Transform of the input and applying a binary mask in the frequency domain, which adds minimal computational overhead (see Appendix \ref{sec:app:experimental_details}). 

\begin{algorithm} 
\caption{Frequency Guided Posterior Sampling} \label{alg:ours}
\begin{algorithmic}[1]
\Require $\mby$, $\mc{A}$, $\{\phi_t(\cdot), \mbf{S}_t, \alpha_t, \tilde{\sigma}_t \}_{t = 1}^T$
\State $\mbx_T \sim \mc{N}(\mbf{0}, \mbf{I})$ 
\For{$t \gets T$ to $1$}
    \State $\bm{\hat{s}} \gets s_\theta(\mbx_t, t)$ \Comment{Predict $\score{p(\mbx_t)}$}
    \State $\bar{\alpha_t} \gets \prod_{j = 1}^i \alpha_j$
    \State $\tweedie \gets \frac{1}{\sqrt{\bar{\alpha}_t}} (\mbx_t + (1 - \bar{\alpha}_t)\bm{\hat{s}})$ \Comment{Tweedie estimate}
    \State $\mbf{z} \sim \mc{N}(\mbf{0}, \mbf{I})$
    \State $\mbx_{t - 1}' \gets \frac{\sqrt{\alpha_t}(1 - \bar{\alpha}_{t - 1})}{1 - \bar{\alpha}_t}\mbx_t + \frac{\sqrt{\bar{\alpha}_{t - 1}}(1 - \alpha_t)}{1 - \bar{\alpha}_t}\tweedie + \tilde{\sigma}_t \mbf{z}$ \Comment{DDIM Sampling}
    \State $\mbx_{t - 1} \gets \mbx_{t - 1}' - \mbf{S}_t \grad_{\mbx_t} \norm{\phi_t(\mby) - \phi_t(\mc{A}(\tweedie))}{2}^2$ \Comment{FGPS Approximation of $\score{p(\mby \mid \mbx_t)}$}
\EndFor
\end{algorithmic}
\end{algorithm}

\section{Theoretical Analysis} \label{sec:limitations}

In this section, we utilize distributional assumptions to provide a precise theoretical analysis of the approximation errors induced by traditional posterior sampling methods as well as our proposed method.


\subsection{Mind the (Approximation) Gap} \label{sec:theory}

Since we rely on a spectral characterization of the reverse diffusion process, our analysis requires two key properties. 
\begin{enumerate}
    \item We need to analyze the frequency characteristics of the data as well as the interplay between the frequency characteristics of the data and the forward operator. 
    \item To quantify the approximation gap, we need to compute the true conditional score. 
\end{enumerate} 
For the first property, we look at the power spectral density of the data, as defined in \Cref{eq:psd}, which tells us how much the signal's power is distributed across different frequencies. For the second property, it is natural to suppose the clean data $\mbx_0$ comes from a multivariate Gaussian distribution. Although these two properties seem conflicting, the following lemma establishes an equivalence between the data power spectral density and the data covariance.
\begin{lemma} \label{lem:wiener-khinchin}
    Let $f_k = \frac{k}{n}$ for $k = 0, \dots, n - 1$ denote the DFT sample frequencies for a signal of length $n$. Then, there exists a covariance matrix $\bm{\Sigma}_f$ such that for $\mbx \sim \mc{N}(\mbf{0}, \bm{\Sigma}_f)$, the following two properties hold. First, the signal $\mbx$ follows a power law in the frequency domain with parameters $c, \beta > 0$, i.e., it has a power spectral density $S(f_k) = c |f_k|^{-\beta}$ for the non-zero DFT sample frequencies $f_k$. Second, the eigenvalues of $\bm{\Sigma}_f$ are precisely $c |f_k|^{-\beta}$.
\end{lemma}
\noindent This is a special case of a classical result known as the Wiener-Khinchin Theorem \cite{champeney1987handbook_ch11}. Note that this does not imply that all random vectors that follow a power law in the frequency domain are Gaussian. For example, natural image data tends to follow a radially averaged power law in the frequency domain, but is far from Gaussian \cite{van1996modelling}. Nevertheless, this lemma gives a distributional assumption that mimics the spectral properties of natural image data, which we will show already elucidates failure cases of existing methods. Next, we use these assumptions to exactly quantify the approximation gap of our method. 

\begin{theorem} \label{thm:approx_gap}
Suppose $\mbx_0$ is drawn from $\mc{N}(\mbf{0}, \bm{\Sigma})$ and we are given linear measurements $\mby = \mc{A}(\mbx_0)+ \mbf{z}$, where $\mc{A}(\mbx_0) = \mbA \mbx_0$ and $\mbf{z} \sim \mc{N}(0, \sigma_y^2 I)$. Suppose that the intermediate value $\mbx_t$ of the continuous-time reverse diffusion from \Cref{eq:reverse_cond_sde} can be written as $\mbx_t = \sqrt{\bar{\alpha}_t} \mbx_0 + \sqrt{1 - \bar{\alpha}_t} \epsilon$ where $\epsilon \sim \mc{N}(\mbf{0}, \mbf{I})$.  Then, we have that the true noisy likelihood score $\score{p(\mby \mid \mbx_t)}$ is
     \begin{equation} \label{eq:true_cond_score}
         \!\!\!
        (\mbA \bm{\Gamma}_t)^T \!(\mbA \bm{\Sigma}_{0 \mid t} \mbA^T \!+ \sigma_y^2\bm{I})^{-1} ( \mby - \mbA \tweedie ),\!\!
     \end{equation}
where $\bm{\Gamma}_t = \sqrt{\bar{\alpha}_t} \bm{\Sigma} (\bar{\alpha}_t \bm{\Sigma} + (1 - \bar{\alpha}_t) \bm{I})^{-1}$, $\bm{\mu}_{0 \mid t} = \E[\mbx_0 \mid \mbx_t] = \bm{\Gamma}_t \mbx_t$ and $\bm{\Sigma}_{0 \mid t} = \Cov [\mbx_0 \mid \mbx_t] = \bm{\Sigma} - \sqrt{\bar{\alpha}_t} \bm{\Gamma}_t \bm{\Sigma}$. Moreover, the FGPS approximation $\score{p(\mbf{C}_t \mby \mid \tweedie)}$ can be analytically calculated as
     \begin{equation} \label{eq:our_cond_score}
         \left(\mbf{C}_t \mbA \bm{\Gamma}_t \right)^T (\sigma_y^{2} \mbf{C}_t \mbf{C}_t^T)^{-1} (\mbf{C}_t \mby - \mbf{C}_t \mbA \tweedie) .
     \end{equation}
\end{theorem}
\noindent The proof is in Appendix \ref{sec:app:proofs}. The difference of \Cref{eq:true_cond_score} and \Cref{eq:our_cond_score} gives a precise expression for the approximation gap of our method. The $\bar{\alpha}_t$ corresponds to the variance schedule of standard diffusion models such as DDPM \cite{ho2020denoising}. Our assumption that $\mbx_t = \sqrt{\bar{\alpha}_t} \mbx_0 + \sqrt{1 - \bar{\alpha}_t} \epsilon$ isolates the effects of the approximation gap as the generated sample is exactly $\mbx_0$. Our framework also allows us to analytically study the approximation gap for other inverse problem solvers. Below, we give one example for the popular Diffusion Posterior Sampling (DPS) solver.

\begin{corollary} \label{cor:dps_approx}
    Under the same assumptions and notation as in \Cref{thm:approx_gap}, we have that the DPS approximation $\score{p(\mby \mid \tweedie)}$ can be analytically calculated as
    \begin{equation} \label{eq:dps_cond_score}
         \!\!\!
        \sigma_y^{-2} (\mbA \bm{\Gamma}_t)^T (\mby - \mbA \tweedie ).\!\!
     \end{equation}
\end{corollary}

\noindent To the best of our knowledge, this is the first time any distributional assumptions have been used to characterize the exact approximation gap of any inverse problem solver. While \cite{chung2022diffusion} and \cite{yang2024guidance} provide upper and lower bounds on this gap, these bounds do not exploit the structure of the data distribution and as such, can be loose. We formally compare the approximation gap of FGPS and DPS in Appendix \ref{sec:app:theory}.


\subsection{Approximation Gap on Synthetic Data}

\begin{figure}[htbp]
    \centering
\includegraphics[width=0.42\textwidth]{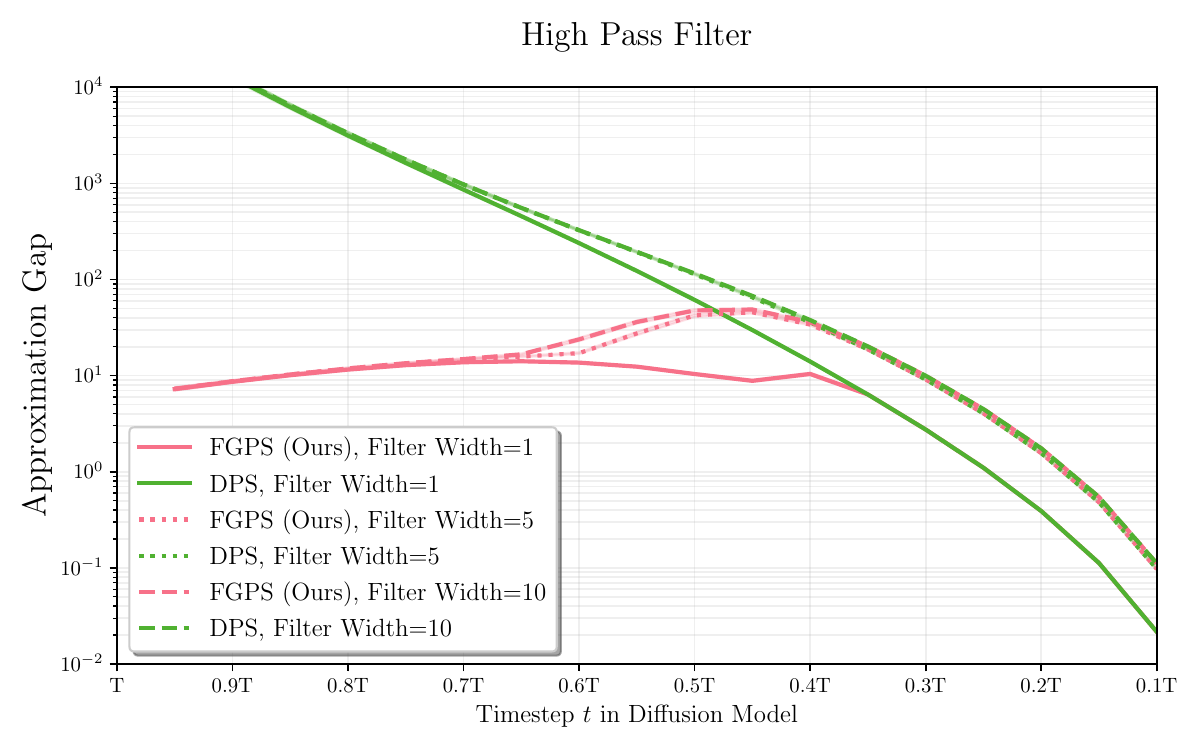}
    \caption{We show the exact approximation gap for FGPS and DPS across diffusion timesteps when the data follows a power law in the frequency domain and the forward operator is a a high-pass filter (Dirac kernel minus a Gaussian kernel) of varying width. The width denotes the $\sigma$ of the Gaussian kernel. Note the $y$-axis is in log-scale.} 
    \label{fig:approx_gap}
\end{figure}

 In this subsection, we probe the approximation gap of our method and DPS on synthetic data, and we give examples of forward operators where FGPS results in a significantly less approximation gap than DPS. In order to construct these examples, we note that \Cref{thm:approx_gap} depends crucially on how the forward operator $\mbA$ interacts with the data covariance $\bm{\Sigma}$, particularly through the term $\mbA \bm{\Sigma}_{0 \mid t} \mbA^T$. Intuitively, if the data covariance is chosen to be $\bm{\Sigma}_f$ from Lemma \ref{lem:wiener-khinchin}, the frequency characteristics of the data and the forward operator would be misaligned when the forward operator is a high-pass filter, so the measurement contains primarily high-frequency components. Some natural imaging systems fall into this regime, e.g., phase contrast microscopy where intensity measurements are high-pass filtered versions of refractive index.
 
 To verify the above intuition, we begin by simulating 1-D signals $\mbx_0 \sim \mc{N}(\mbf{0}, \bm{\Sigma}_f)$ that follow a power law in the frequency domain, which uses the construction of $\bm{\Sigma}_f$ given in Lemma \ref{lem:wiener-khinchin}. Then, the corresponding measurements for these signals are $\mby = \mbA \mbx_0 + \mbf{z}$ with $\mbf{z} \sim \mc{N}(\mbf{0}, \mbf{I})$ , where $\mbA$ is the convolution matrix corresponding to convolution with a Dirac kernel minus a Gaussian kernel of standard deviation $\sigma$. This kernel approximates a high-pass filter.  
 
 Figure \ref{fig:approx_gap} shows the approximation gap of DPS and FGPS as a function of the diffusion timesteps (where higher timesteps correspond to a lower value of $\bar{\alpha}_t$) for the variance schedule used in the DDPM model for $T = 1000$ \cite{ho2020denoising}. In accordance with our intuition, we see that the DPS approximation worsens significantly as we increase $t$, likely because higher frequency noise components in $\mbx_t$ are amplified by the high-pass filter. In contrast,  we find our method has an approximation gap that is several orders of magnitude smaller than DPS in the early stages of the reverse process. As the early stages of the reverse process are when the generated image develops coarse features, we argue it is crucial that the approximation gap is small in these phases. Refer to Appendix \ref{sec:app:experimental_details} for the full experimental details.

\section{Experiments}
\label{sec:experiments}

Although our theoretical results were restricted to cases where FGPS provably improved over existing methods, we now turn to empirical validation of FGPS on several real-world datasets and image restoration tasks. 

\begin{figure*}[htbp]
    \centering
    \includegraphics[width=0.73\textwidth]{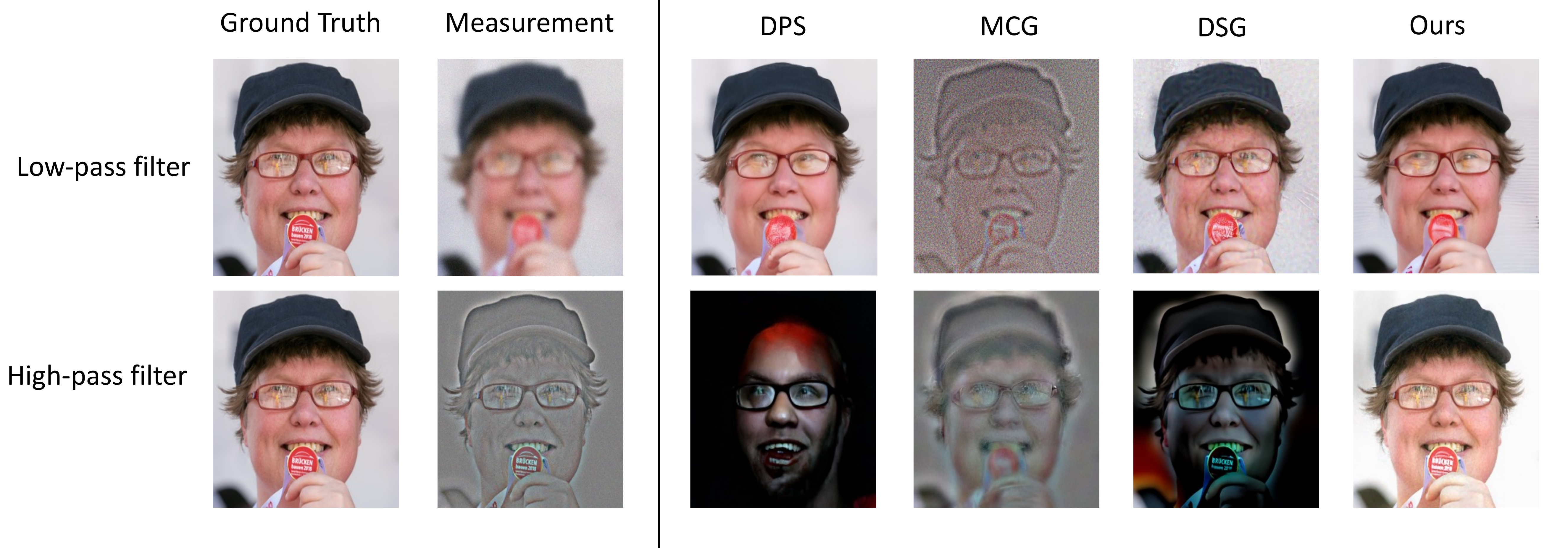}
    \caption{Even for linear inverse problems,  SOTA diffusion-based methods (DPS, MCG, DSG) can struggle to solve inverse problems when the forward operator is convolution with a high-pass filter. In contrast, our method is still able to achieve a high quality reconstruction.}
    \label{fig:highpass_limitations}
    \vspace{-1mm}
\end{figure*}

\begin{table*}[ht] 
\centering
\footnotesize
\setlength{\tabcolsep}{2.5pt}  
\begin{tabular}{l|cccc|cccc|cccc}
\toprule
& \multicolumn{4}{c|}{Gaussian Blur} & \multicolumn{4}{c|}{Motion Blur} & \multicolumn{4}{c}{High-Pass Filter} \\
Method & FID$\downarrow$ & LPIPS$\downarrow$ & PSNR$\uparrow$ & SSIM$\uparrow$ & FID$\downarrow$ & LPIPS$\downarrow$ & PSNR$\uparrow$ & SSIM$\uparrow$ & FID$\downarrow$ & LPIPS$\downarrow$ & PSNR$\uparrow$ & SSIM$\uparrow$ \\
\midrule
\multicolumn{12}{c}{\textbf{FFHQ Dataset}} \\
\midrule
Score-SDE/ILVR \cite{choi2021ilvr, song2020score} & 58.79 & 0.266 & 23.34 & 0.648 & 69.10 & 0.387 & 18.59 & 0.457 & 85.35 & 0.356 & 11.86 & 0.600 \\
MCG \cite{chung2022improving} & 310.33 & 1.035 & 11.64 & 0.143 & 296.93 & 0.771 & 11.54 & 0.127 & 151.77 & 0.581 & \underline{12.48} & 0.466 \\
DPS \cite{chung2022diffusion} & \textbf{21.69} & \textbf{0.116} &  \textbf{26.01} & \textbf{0.743} & \underline{27.15} & \underline{0.153} &  24.17 & \underline{0.688} & 121.57 & 0.694 & 6.17& 0.222 \\
DSG \cite{yang2024guidance} & 44.63 & 0.294 & 22.75 & 0.633 & 32.74 & 0.257 & \textbf{25.66} & 0.687 & \underline{17.81} & \underline{0.246} & 12.12 & \underline{0.695} \\
FGPS (Ours) & \underline{21.89} & \underline{0.139} & \underline{25.44} & \underline{0.727}  & \textbf{20.48} & \textbf{0.127} & \underline{25.48} & \textbf{0.723}  & \textbf{10.02} & \textbf{0.124} & \textbf{18.79} & \textbf{0.739} \\
\midrule
\multicolumn{12}{c}{\textbf{ImageNet Dataset}} \\
\midrule
Score-SDE/ILVR \cite{choi2021ilvr, song2020score} & 104.81 & 0.676 & 16.27 & 0.318 & 100.54 & 0.665 & 16.62 & 0.290 & 226.51 & 0.784 & 10.69 & 0.345 \\
MCG \cite{chung2022improving} & 320.63 & 1.005 & 12.01 & 0.121 & 306.43 & 0.963 & 11.80 & 0.076 & 295.10 & 1.000 & \underline{11.62} & 0.278 \\
DPS \cite{chung2022diffusion} & 98.02 & 0.517 & 17.35 & 0.364 & \underline{77.74} & \underline{0.407} & \underline{20.90} & \underline{0.551} & 111.73 & 0.569 & 10.66 & \underline{0.396} \\
DSG \cite{yang2024guidance} & \underline{87.51} & \underline{0.375} & \underline{20.75} & \underline{0.524} & 107.25 & 0.453 & 19.58 & 0.488 & \underline{110.85} & \underline{0.546} & 8.425 & 0.382 \\
FGPS (Ours) & \textbf{56.46} & \textbf{0.294} & \textbf{21.70} & \textbf{0.574} & \textbf{49.25} & \textbf{0.267} & \textbf{22.01} & \textbf{0.601} & \textbf{24.44} & \textbf{0.192} & \textbf{15.96} & \textbf{0.686}  \\
\bottomrule
\end{tabular}
\caption{Quantitative comparison of different methods on FFHQ and ImageNet validation datasets. Bolded entries indicate best performance and underlined entries indicate second best performance.}
\label{tab:linear_inv}
\end{table*} 

\begin{figure*}[ht]
    \centering
    \includegraphics[width=\textwidth]{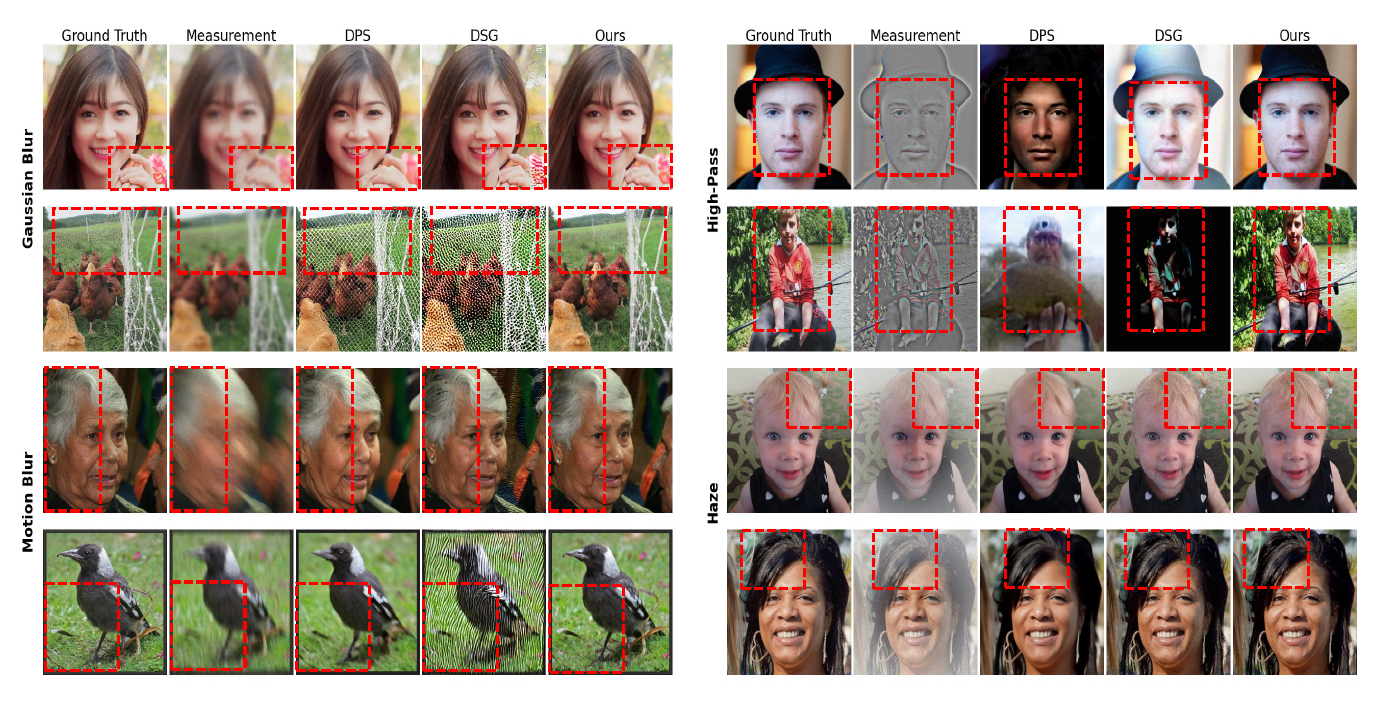}
    \caption{Qualitative results on FFHQ and ImageNet datasets. The dotted red boxes highlight areas of the image where our method results in higher-quality reconstructions than DPS and DSG. Zoomed in versions can be found in the Appendix.}
    \label{fig:main_results}
\end{figure*}

\subsection{Experimental Setup}

\noindent \textbf{Datasets.} We evaluate our method on two datasets: the Flickr-Faces-HQ (FFHQ 256x256) dataset of high-resolution faces \cite{Karras2019ASG}, and the ImageNet-1000 dataset of diverse objects \cite{deng2009imagenet}. We utilize pretrained diffusion models from \cite{chung2022diffusion} and \cite{dhariwal2021diffusion} for the FFHQ and ImageNet datasets respectively, using a class-conditional model for ImageNet.

\noindent \textbf{Tasks.} We consider four challenging image restoration tasks for our experimental evaluation. The first three are linear inverse problems: Gaussian deblurring, motion deblurring, and deconvolution with a high-pass filter. The last is a nonlinear inverse problem: image dehazing. The details of the four forward operators we consider can be found in Appendix \ref{sec:app:experimental_details}. All measurements have Gaussian measurement noise added with standard deviation $0.05$. 

\noindent \textbf{Baselines and Metrics.} We compare against several competitive unsupervised methods. The first two baselines are popular unsupervised diffusion-based solvers: Diffusion Posterior Sampling (DPS) \cite{chung2022diffusion}, a method that approximates the conditional score with a Dirac delta around the posterior mean, and Score-SDE/ILVR \cite{choi2021ilvr, song2020score}, a method that uses a sequence of noisy measurements computed using the diffusion model variance schedule. Then, we compare to two state-of-the-art methods focused on mitigating the approximation gap problem: Manifold Constrained Gradient (MCG) \cite{chung2022improving} and Diffusion with Spherical Guidance (DSG) \cite{yang2024guidance}. For image dehazing, we further compare to two problem-specific baselines: DoubleDIP \cite{gandelsman2019double} and AOD-Net \cite{li2017aod}, a supervised method trained to dehaze images from pairs of clean-hazy images. We use the same pretrained score function for all diffusion-based methods. We evaluate our method on two perceptual metrics, FID and LPIPS distance, and two distortion-based metrics, PSNR and SSIM.

\noindent \textbf{Frequency Curriculum.} For both FFHQ and ImageNet across all tasks, in our method we set $\tau_1$ and $\tau_T$ to be $\frac{75*100}{256}\%$ and $\frac{10*100}{256}\%$ of the overall frequency range respectively. We set different frequency curricula for the two datasets as described in \Cref{sec:method}. We find the exponential schedule works well for face datasets with many high-frequency details, while the uniform schedule is better for diverse, multiclass datasets like ImageNet. We emphasize that $\mbf{k}_t$ does not require extensive tuning, as a simple schedule already beats baseline methods (see Appendix \ref{sec:app:further_exp}). 


\noindent \textbf{Step Size Schedule.} In practice, we take $\mbf{S}_t$ to be a hyperparameter as opposed to being fixed as $(\sigma_y^2 \mbf{C}_t \mbf{C}_t^T)^{-1}$ in order to further control the effect of the approximation error. We adopt the choice that DPS makes where the step size is $\frac{\kappa}{\norm{\phi_t(\mby) - \phi(\mc{A}(\tweedie))}{2}} \mbf{I}$ for a hyperparameter $\kappa$. Using the intuition from \Cref{fig:approx_gap} that the approximation gap is small in the initial steps of the reverse process, we set $\kappa$ large initially and decay it following a cosine curve. 

\subsection{Results}

\noindent \textbf{Empirical Validation of Theoretical Results.} In Figure \ref{fig:highpass_limitations}, we consider two forward operators on an example from the FFHQ dataset: a) a standard Gaussian blurring low-pass filter, and b) the high-pass filter we studied in our theoretical analysis. As expected, prior approaches such as Diffusion Posterior Sampling 
\cite{chung2022diffusion} and variants \cite{chung2022improving, yang2024guidance} perform well at Gaussian deblurring. However, when the forward operator is a high-pass filter, remarkably these approaches all fail to produce coherent outputs, which aligns with our theoretical results about the large approximation gap of existing methods. FGPS on the other hand is still able to produce high-quality reconstructions.

\noindent \textbf{Results on Linear Inverse Problems.} \Cref{tab:linear_inv} shows our main results for linear inverse problems. On the FFHQ dataset, for simple tasks such as Gaussian deblurring, FGPS gives comparable performance to baselines, which already have strong reconstruction performance. However, on more complex image restoration tasks such as motion deblurring, our method outperforms the baselines in almost all metrics. Notably, even though Score-SDE/ILVR uses a sequence of noisy measurements, this sequence is not adapted to the frequency characteristics of the data, which results in lower quality reconstructions as compared to our method (see Appendix \ref{sec:app:ilvr} for further discussion). In \Cref{fig:main_results}, we see the qualitative differences between the methods. For high-pass filter operators, existing methods generate visually implausible images, indicating that the approximation errors push the denoised outputs away from the natural image manifold. In contrast, our method is able to accurately reconstruct the underlying ground truth image. For blurry images, we see that DPS tends to produce smoothed images, losing high-frequency details. While DSG is able to improve upon this and generate images with fine-grained characteristics of the underlying image, it tends to also generate artifacts. In contrast, due to our frequency curriculum, our method is able to generate images with high perceptual quality. We give further qualitative comparisons to baselines in Appendix \ref{sec:app:further_exp}.

\noindent \textbf{Role of Dataset.} For the ImageNet dataset, we observe that FGPS significantly outperforms all baselines, especially in perceptual quality. We conjecture that this is because on more complex datasets with varying frequency characteristics, the explicit frequency schedule for the measurement is essential to ensure the reverse process iterates do not deviate far from the intermediate noisy data manifolds learned by the diffusion models.

\noindent \textbf{Results on Image Dehazing.} Table \ref{tab:haze_comparison} shows our main results for the nonlinear image dehazing task on the FFHQ dataset. A key strength of our method is the ability to handle nonlinear inverse problems, unlike methods such as \cite{kawar2022denoising, song2020score, choi2021ilvr}. In \Cref{fig:main_results}, we observe that our method successfully reconstructs finer details such as background details and facial characteristics similar to what we observed for linear inverse problems. DSG produces slightly grainy reconstructions, but performs much better on the dehazing task than the other tasks. Due to the powerful generative capacity of diffusion models, our method also performs better than problem-specific supervised methods such as AOD-Net \cite{li2017aod}, a phenomenon also observed in \cite{song2023pseudoinverse}. We provide more qualitative examples in Appendix \ref{sec:app:further_exp}.

\begin{table}[H]
\centering
\begin{tabular}{l|c|c|c|c}
\toprule
Method & FID$\downarrow$ & LPIPS$\downarrow$ & PSNR$\uparrow$ & SSIM$\uparrow$ \\
\hline
DoubleDIP \cite{gandelsman2019double} & 50.51 & 0.500 & 13.80 & 0.483 \\
AOD-Net \cite{li2017aod} & 33.35 & 0.325 & 17.42 & 0.532 \\
DPS \cite{chung2022diffusion} & 24.24 & 0.137 & 25.88 & 0.806 \\
DSG \cite{yang2024guidance} & \underline{10.55} & \underline{0.134} & \underline{26.07} & \underline{0.795} \\
FGPS (Ours) & \textbf{9.37} & \textbf{0.066} & \textbf{26.88} & \textbf{0.840} \\
\hline
\end{tabular}
\caption{Comparison of different haze removal methods. Bolded entries indicate best performance and underlined entries indicate second best performance.}
\label{tab:haze_comparison}
\vspace{-3mm}
\end{table}

\begin{figure}[htbp]
    \vspace{-2mm}
    \centering
\includegraphics[width=0.27\textwidth]{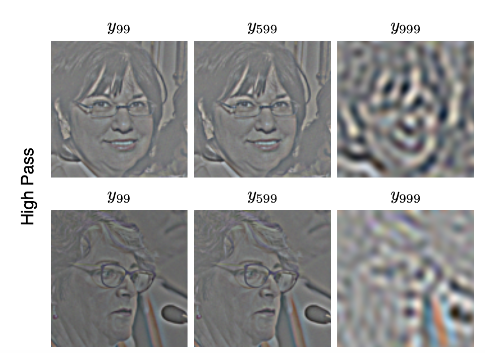}
    \caption{Visualization of Transformed Measurements $y_t$ at different timesteps $t$, demonstrating our coarse-to-fine strategy.}
    \label{fig:cond_meas}
    \vspace{-5mm}
\end{figure}

\noindent \textbf{Visualizing Measurement Sequence.} In \Cref{fig:cond_meas}, we show the transformed measurements at three timesteps in the reverse diffusion process when running FGPS. In the early stages of the reverse process, the measurement retains only a coarse outline of the measurement. This coarse-to-fine strategy is essential to ensuring $\mbx_t$ remains on the data manifold learned by the diffusion model.

\noindent \textbf{Effect of Frequency Curriculum.} A key ingredient of our method is the frequency curriculum with a time-varying low pass filter $\mbf{k}_t$. To understand the effect of this curriculum, we compare taking a time-dependent $\mbf{k}_t$ vs. a fixed $\mbf{k}$ that zeroes out frequencies above $\tau_1$ for all $t$. In \Cref{fig:ablation_kt}, we see that for the high-pass filter operator, the generated reconstructions are still visually implausible images for a fixed $\mbf{k}$, indicating the frequency curriculum is crucial to our method. Our method only requires $\mbf{k}_1$ to be a low frequency cut-off, otherwise the reconstructions look like \Cref{fig:ablation_kt}. This corroborates our theoretical results, which show that the approximation gap early on in the reverse process greatly influences reconstruction quality. Finally, our method does not qualitatively change with higher cutoffs for $\mbf{k}_T$. We examine this in detail in Appendix \ref{sec:app:further_exp} as well as conduct further ablation studies. 

\begin{figure}[htbp]
    \centering
\includegraphics[width=0.33\textwidth]{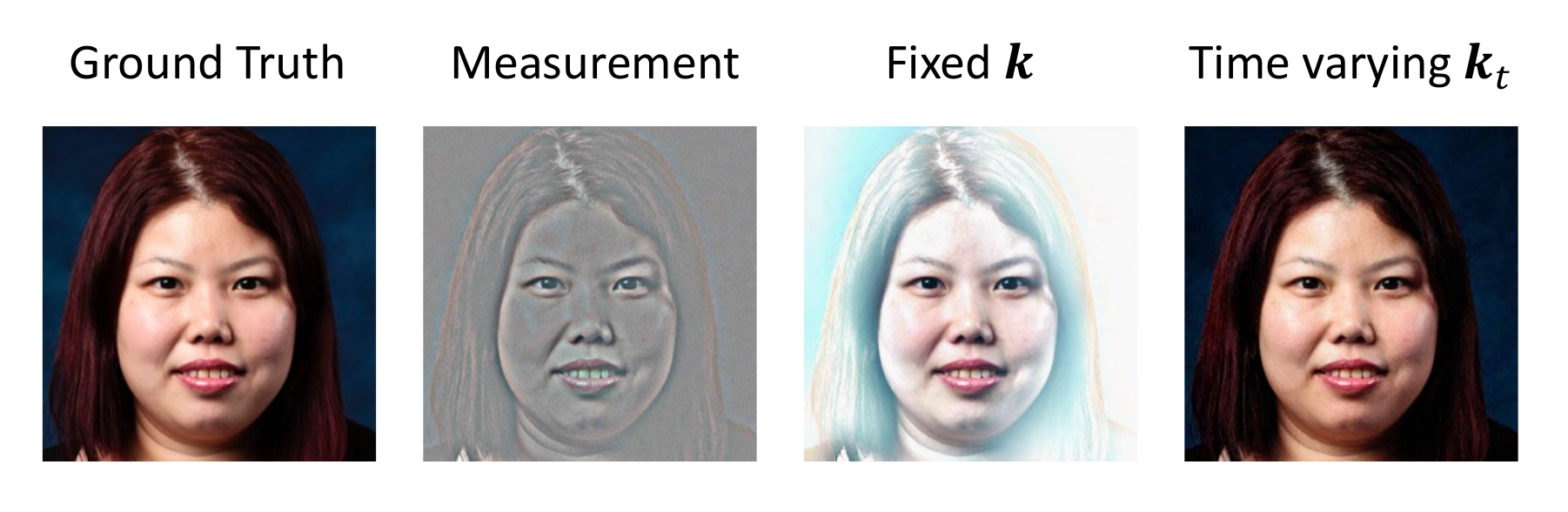}
    \caption{Qualitative Ablation Study: fixed low-pass filter vs. time-dependent low pass filter.}
    \label{fig:ablation_kt}
    \vspace{-1mm}
\end{figure}

\section{Conclusion}

In this paper, we introduce Frequency Guided Posterior Sampling (FGPS), a method that progressively incorporates frequency information in a data-dependent manner. We provide a theoretical analysis characterizing the approximation error of FGPS and existing diffusion-based restoration methods, demonstrating how these errors can become severe when forward operators interact with high-frequency components of images. Our experiments on challenging image restoration tasks, such as motion deblurring and image dehazing, demonstrate that FGPS significantly outperforms existing methods. This work opens up new directions for analyzing diffusion-based inverse problem solvers by carefully considering the frequency characteristics of both the data and the degradation process.

\section*{Acknowledgements}

D.T. was supported by a gift from AWS to Penn Engineering's ASSET Center for Trustworthy AI. R.V. thanks the support of Penn startup funds and the Research Collaboration on the Mathematical and Scientific Foundations of Deep Learning under grants NSF 2031985 and Simons 814201.

{
    \small
    \bibliographystyle{ieeenat_fullname}
    \bibliography{main}
}

\clearpage
\setcounter{page}{1}
\maketitlesupplementary

\appendix

\section{Further Background} \label{sec:app:background}

\subsection{Diffusion Models}

Recall that unconditional diffusion models learn the score function $\score{p(\mbx)}$ through denoising score matching. Assuming we have learned a neural network that approximates this score well, the backbone of most state of the art diffusion-based inverse problem solvers is the estimation of $\tweedie = \E[\mbx_0 \mid \mbx_t]$ using the learned diffusion model. This is known as Tweedie's formula, which we state below.

\begin{lemma} (Tweedie's formula \cite{efron2011tweedie})
    Suppose $p(\mbx_t \mid \mbx_0) = \mc{N}(\sqrt{\alpha_t} \mbx_0, (1 - \alpha_t) \mbf{I})$. Then the posterior mean is
    \begin{equation}
        \E[\mbx_0 \mid \mbx_t] = \frac{1}{\sqrt{\alpha_t}} (\mbx_t + (1 - \alpha_t) \score{p(\mbx_t)})    
    \end{equation}
\end{lemma}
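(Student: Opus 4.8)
The plan is to prove Tweedie's formula directly from the definition of the marginal density $p(\mbx_t) = \int p(\mbx_t \mid \mbx_0)\, p(\mbx_0)\, d\mbx_0$, by differentiating under the integral sign and then recognizing a conditional expectation via Bayes' rule. The one structural fact I would exploit is that for the Gaussian transition kernel $p(\mbx_t \mid \mbx_0) = \mc{N}(\sqrt{\alpha_t}\mbx_0, (1-\alpha_t)\mbf{I})$, the gradient with respect to $\mbx_t$ satisfies $\grad_{\mbx_t} p(\mbx_t \mid \mbx_0) = -\frac{1}{1-\alpha_t}(\mbx_t - \sqrt{\alpha_t}\mbx_0)\, p(\mbx_t \mid \mbx_0)$, which follows immediately by differentiating the exponent $-\frac{\norm{\mbx_t - \sqrt{\alpha_t}\mbx_0}{2}^2}{2(1-\alpha_t)}$ of the density.

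First I would write $\grad_{\mbx_t} p(\mbx_t) = \int \grad_{\mbx_t} p(\mbx_t \mid \mbx_0)\, p(\mbx_0)\, d\mbx_0$, pushing the gradient inside the integral, and substitute the identity above to obtain $\grad_{\mbx_t} p(\mbx_t) = -\frac{1}{1-\alpha_t}\int (\mbx_t - \sqrt{\alpha_t}\mbx_0)\, p(\mbx_t \mid \mbx_0)\, p(\mbx_0)\, d\mbx_0$. Dividing both sides by $p(\mbx_t)$ turns the left side into the score $\score{p(\mbx_t)}$, and on the right turns the factor $p(\mbx_t \mid \mbx_0)\, p(\mbx_0)/p(\mbx_t)$ into the posterior density $p(\mbx_0 \mid \mbx_t)$ by Bayes' rule. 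The integral then collapses to a conditional expectation, yielding the compact relation $\score{p(\mbx_t)} = -\frac{1}{1-\alpha_t}(\mbx_t - \sqrt{\alpha_t}\,\E[\mbx_0 \mid \mbx_t])$.

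The final step is purely algebraic: I would solve this relation for $\E[\mbx_0 \mid \mbx_t]$, multiplying through by $(1-\alpha_t)$, moving $\mbx_t$ to the other side, and dividing by $\sqrt{\alpha_t}$ to arrive at $\E[\mbx_0 \mid \mbx_t] = \frac{1}{\sqrt{\alpha_t}}(\mbx_t + (1-\alpha_t)\score{p(\mbx_t)})$, exactly as claimed.

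The main obstacle, though a standard one, is rigorously justifying the interchange of differentiation and integration in the computation of $\grad_{\mbx_t} p(\mbx_t)$. I would handle this by dominated convergence: the Gaussian kernel and its gradient are smooth and, for fixed $\mbx_t$, decay rapidly in $\mbx_0$, so provided $p(\mbx_0)$ has a finite first moment (which is in any case required for $\E[\mbx_0 \mid \mbx_t]$ to be well defined), one can bound the integrand uniformly over a neighborhood of $\mbx_t$ by an integrable envelope and pass the derivative inside. Everything else is an exact Gaussian manipulation together with Bayes' rule, so no approximation enters anywhere — the result is an exact identity, not a bound.
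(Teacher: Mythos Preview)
Your proposal is correct and follows essentially the same route as the paper's proof: differentiate the marginal under the integral, use the Gaussian form of $\grad_{\mbx_t} p(\mbx_t \mid \mbx_0)$, apply Bayes' rule to obtain a posterior expectation, and rearrange. The only difference is cosmetic (the paper phrases the gradient step via the log-derivative identity $\grad p = p\,\grad\log p$), and you additionally flag the dominated-convergence justification for the interchange of gradient and integral, which the paper leaves implicit.
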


\begin{proof}
    We expand the score function as
    \begin{align}
        \score{p(\mbx_t)} &= \frac{\grad_{\mbx_t} p(\mbx_t)}{p(\mbx_t)} \\
        &= \frac{1}{p(\mbx_t)} \int \grad_{\mbx_t} p(\mbx_t \mid \mbx_0) p(\mbx_0) \ d\mbx_0.
    \end{align}
    We can rewrite $\grad_{\mbx_t} p(\mbx_t \mid \mbx_0)$ as $p(\mbx_t \mid \mbx_0) \grad_{\mbx_t} \log p(\mbx_t \mid \mbx_0)$ and group $\frac{p(\mbx_t \mid \mbx_0) p(\mbx_0)}{p(\mbx_t)} = p(\mbx_0 \mid \mbx_t)$ to give
    \begin{align}
        &= \int p(\mbx_0 \mid \mbx_t) \grad_{\mbx_t} \log p(\mbx_t \mid \mbx_0) \ d\mbx_0 \\
        &= \int p(\mbx_0 \mid \mbx_t) \frac{\sqrt{\alpha_t} \mbx_0 - \mbx_t}{1 - \alpha_t} \ d\mbx_0 \\
        &=  \frac{\sqrt{\alpha_t} \E[\mbx_0 \mid \mbx_t] - \mbx_t}{1 - \alpha_t}.
    \end{align}
    Rearranging, we have our final result that gives the posterior mean as a function of the unconditional score function.
\end{proof}

\subsection{Signal Processing}

The Fourier transform is a fundamental tool in signal processing that decomposes a signal into its constituent frequencies. For a continuous signal $\mbx(t)$, the Fourier transform is given by
\begin{equation}
\mc{F}_{\text{cont}}(\mbx)[f] = \int_{-\infty}^{\infty} \mbx(t)e^{-2\pi i f t}dt,
\end{equation}
where $i$ denotes the complex root of unity. In discrete domains, such as digital images, we work with the Discrete Fourier Transform (DFT). For a signal $\mbx \in \R^n$, we denote its DFT as $\mathcal{F}(\mbx)$, where $\mathcal{F}(\mbx)[f_k]$ represents the frequency component at the k-th frequency
\begin{equation}
\mathcal{F}(\mbx)[f_k] = \sum_{n=0}^{N-1} \mbx[n]e^{-2\pi i k n/N}.
\end{equation}
The DFT can be expressed as a matrix operation $\mbf{F} \in \C^{n \times n}$ where
\begin{equation}
\mbf{F}_{jk} = \frac{1}{\sqrt{n}}e^{-2\pi i jk/n}.
\end{equation}
This matrix $\mbf{F}$ is unitary. A discrete convolution operation $\mbx \circledast \mbf{h}$ can be represented as a matrix multiplication $\mbf{C}_h \mbx$, where $\mbf{C}_h$ is a circulant matrix constructed from the filter $\mbf{h}$. A circulant matrix has the special property that each row is a cyclic shift of the previous row:
\begin{equation}
\mbf{C}_h = \begin{bmatrix}
h_0 & h_{n-1} & \cdots & h_1 \\
h_1 & h_0 & \cdots & h_2 \\
\vdots & \vdots & \ddots & \vdots \\
h_{n-1} & h_{n-2} & \cdots & h_0
\end{bmatrix}.
\end{equation}
A fundamental property of circulant matrices is that they can be diagonalized by the DFT matrix
\begin{equation}
\mbf{C}_h = \mbf{F}^* \text{diag}(\mathcal{F}(\mbf{h})) \mbf{F}.
\end{equation}
This relationship explains why convolution in the spatial domain equals pointwise multiplication in the frequency domain
\begin{equation}
\mathcal{F}(\mbx \circledast \mbf{h}) = \mathcal{F}(\mbx) \odot \mathcal{F}(\mbf{h}).
\end{equation}
Natural images typically exhibit a power law relationship in their frequency spectrum such that
\begin{equation}
|\mathcal{F}(\mbx)[f_k]|^2 \propto \frac{1}{|f_k|^\alpha},
\end{equation}
where $\alpha$ is typically around 2. This relationship, often called the $1/f^2$ law, arises from the fundamental structure of natural scenes:

\begin{itemize}
    \item Natural images tend to be locally smooth with occasional sharp transitions (edges)
    \item Objects in natural scenes exhibit self-similarity across scales
    \item Natural scenes contain hierarchical structures from fine to coarse details
\end{itemize}

\noindent This power law relationship provides a strong prior for image processing tasks, as it captures the statistical regularities present in natural images. The decay of frequency components according to this law explains why natural images are compressible and why high-frequency noise is particularly noticeable in image data.

\section{Proofs for \Cref{sec:limitations}} \label{sec:app:proofs}

We first prove Lemma \ref{lem:wiener-khinchin}, restated below.

\begin{lemma} \label{lem:wiener-khinchin_app}
    Let $f_k = \frac{k}{n}$ for $k = 0, \dots, n - 1$ denote the DFT sample frequencies for a signal of length $n$. There exists a covariance matrix $\bm{\Sigma}$ such that for $x \sim \mc{N}(\mbf{0}, \bm{\Sigma})$, the following two properties hold. First, the signal $\mbx$ follows a power law in the frequency domain with parameters $c, \beta > 0$ i.e. it has a power spectral density $S(f_k) = c |f_k|^{-\beta}$ for the non-zero DFT sample frequencies $f_k$. Second, the eigenvalues of $\bm{\Sigma}$ are precisely $c |f_k|^{-\beta}$.
\end{lemma}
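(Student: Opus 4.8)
The plan is to construct $\bm{\Sigma}$ explicitly as a circulant matrix whose spectral decomposition is dictated by the desired power law, and then verify both claimed properties by direct computation. Concretely, I would set $\bm{\Sigma} = \mbf{F}^* \bm{\Lambda} \mbf{F}$, where $\mbf{F}$ is the (unitary) DFT matrix from the background section and $\bm{\Lambda} = \diag(\lambda_0, \lambda_1, \dots, \lambda_{n-1})$ with $\lambda_k = c|f_k|^{-\beta}$ for $k \neq 0$, and $\lambda_0$ set to an arbitrary nonnegative constant (say $\lambda_0 = c$, or the average of the other eigenvalues) so that $\bm{\Sigma}$ is well-defined and positive semidefinite. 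Since $\mbf{F}$ is unitary and $\bm{\Lambda}$ is diagonal with nonnegative entries, $\bm{\Sigma}$ is Hermitian positive semidefinite; one then checks it is real-valued (hence a genuine covariance matrix) because the eigenvalue sequence is symmetric under $k \mapsto n-k$ — indeed $|f_k| = |f_{n-k}|$ as DFT frequencies — which makes $\mbf{F}^*\bm{\Lambda}\mbf{F}$ a real circulant matrix. This immediately gives the second property: the eigenvalues of $\bm{\Sigma}$ are exactly the $\lambda_k = c|f_k|^{-\beta}$ for the nonzero frequencies.

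For the first property, I would compute the power spectral density $S(f_k) = \E_\mbx[|\mc{F}(\mbx)[f_k]|^2 / n]$ for $\mbx \sim \mc{N}(\mbf{0}, \bm{\Sigma})$. Writing $\mc{F}(\mbx) = \mbf{F}\mbx$ (up to the normalization convention the paper uses), the vector $\mbf{F}\mbx$ is a complex Gaussian with covariance $\mbf{F}\bm{\Sigma}\mbf{F}^* = \mbf{F}\mbf{F}^*\bm{\Lambda}\mbf{F}\mbf{F}^* = \bm{\Lambda}$, using unitarity of $\mbf{F}$. Hence the $k$-th component $\mc{F}(\mbx)[f_k]$ has $\E[|\mc{F}(\mbx)[f_k]|^2] = \lambda_k$ (up to the fixed normalization factor), so $S(f_k) = \lambda_k / n \cdot (\text{const})$, which matches $c|f_k|^{-\beta}$ after absorbing constants into $c$. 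The main bookkeeping obstacle is reconciling the normalization conventions: the paper defines $S(f_k)$ with a $1/n$ factor in \Cref{eq:psd} and uses $\mbf{F}_{jk} = \frac{1}{\sqrt n}e^{-2\pi i jk/n}$ in one place but an unnormalized DFT $\mc{F}(\mbx)[f_k] = \sum_n \mbx[n]e^{-2\pi i kn/N}$ in another. I would fix one convention (say the unnormalized DFT, so $\mc{F}(\mbx) = \sqrt{n}\,\mbf{F}\mbx$), carry the factors of $n$ through, and simply define the eigenvalues to be whatever constant multiple of $|f_k|^{-\beta}$ makes $S(f_k) = c|f_k|^{-\beta}$ come out exactly; since $c$ is a free positive constant in the statement, this is harmless.

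I expect the genuinely delicate point — and the one worth stating carefully rather than waving through — is the realness and positive-semidefiniteness of $\bm{\Sigma}$: one must check that choosing $\lambda_k$ symmetric in $k \leftrightarrow n-k$ is both possible (it is, since $|f_k|^{-\beta} = |f_{n-k}|^{-\beta}$ under the DFT frequency indexing) and sufficient to guarantee that $\mbf{F}^*\bm{\Lambda}\mbf{F}$ has real entries, which is the standard fact that a circulant matrix is real iff its DFT-domain symbol is conjugate-symmetric. Everything else is a routine consequence of the unitary diagonalization of circulant matrices together with the change-of-variables formula for the covariance of a linearly transformed Gaussian. I would close by remarking, as the excerpt already does, that this is precisely the finite-dimensional Wiener–Khinchin correspondence between a stationary (circulant) covariance and its power spectrum.
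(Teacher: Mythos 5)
Your proposal is correct, and it builds exactly the same matrix as the paper — a circulant covariance diagonalized by the DFT with spectrum $c|f_k|^{-\beta}$ — but verifies its properties by a different route. The paper constructs $\bm{\Sigma}$ as the circulant matrix whose first row is the inverse DFT of $S(f_k)$, observes that a Gaussian vector with circulant covariance is a zero-mean wide-sense stationary process with autocovariance $\mbf{R}$, and then cites the discrete-time Wiener--Khinchin theorem to identify the PSD with $\mc{F}(\mbf{R})[f_k]=S(f_k)$, plus the standard fact that circulant eigenvalues are the DFT of the first row. You instead write $\bm{\Sigma}=\mbf{F}^*\bm{\Lambda}\mbf{F}$ directly and compute $\E\bigl[|\mc{F}(\mbx)[f_k]|^2\bigr]$ from the covariance of the linearly transformed Gaussian, $\mbf{F}\bm{\Sigma}\mbf{F}^*=\bm{\Lambda}$, so Wiener--Khinchin is derived rather than invoked; with the unnormalized DFT and the $1/n$ in the paper's definition of $S(f_k)$ the constants in fact cancel exactly, so no absorption into $c$ is needed. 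Your version is more self-contained and also makes explicit two points the paper glosses over: the choice of the zero-frequency eigenvalue $\lambda_0$ (where $c|f_k|^{-\beta}$ is undefined), and the realness/positive-semidefiniteness of $\bm{\Sigma}$, which requires the spectrum to be symmetric under $k\mapsto n-k$. Note only that with the lemma's literal indexing $f_k=k/n$ this symmetry fails; one must read the sample frequencies in the usual aliased (signed) convention, or symmetrize the spectrum explicitly — a caveat that applies equally to the paper's construction (otherwise its $\mbf{R}$ would be complex), so flagging it is a strength of your write-up rather than a gap.
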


\begin{proof}
    We will first provide a construction for $\bm{\Sigma}$. Let $\mbf{R}$ be a $n$-length signal that is the inverse Discrete Fourier Transform of $S(f_k) = c |f_k|^{-\beta}$ for the non-zero DFT sample frequencies $f_k$. We construct $\bm{\Sigma}$ as a circulant matrix whose first row (and column) is $\mbf{R}$.

    Next, we show the two properties of $\bm{\Sigma}$ needed to prove the lemma. First, let $\mbx \sim N(\mbf{0}, \bm{\Sigma})$. Then, $\mbx$ can be viewed as a finite stochastic process that is zero-mean, wide-sense, and stationary. It is zero-mean trivially because $\mbx$ is sampled from a zero-mean distribution. It is wide-sense stationary because $\bm{\Sigma}$ is circulant. Thus, the autocovariance function, which is exactly $\mbf{R}$, depends only on the gap between two elements in the signal. From the discrete-time Wiener-Khinchin theorem, we have that
    \begin{equation} \label{eq:wiener-khinchin_app}
        \E \left[  \frac{|\mc{F}(\mbx)[f_k] | ^2}{n} \right] = \mc{F}(\mbf{R})[f_k]
    \end{equation}
    By construction, we have that $\mc{F}(\mbf{R})[f_k] = S(f_k)$. This shows that in expectation, $\mbx$ follows a power spectral density of $S(f_k)$. Lastly, because $\bm{\Sigma}$ is circulant, we have that the eigenvalues of $\bm{\Sigma}$ are the Discrete Fourier Transform of the first row, which is $\mbf{R}$. As before, by construction, we have that $\mc{F}(\mbf{R})[f_k] = S(f_k)$. From \Cref{eq:wiener-khinchin_app}, we have that the eigenvalues of $\bm{\Sigma}$ are precisely $c |f_k|^{-\beta}$.
\end{proof}

Before we prove our main result, \Cref{thm:approx_gap}, we prove a useful lemma that calculates the posterior denoising distribution under a multivariate Gaussian assumption on the data. 

\begin{lemma} (Posterior Denoising Distribution) \label{lem:denoising_app}
    Suppose $\mbx_0\sim \mc{N}(\mbf{0}, \bm{\Sigma})$. Suppose $\mbx_t \mid \mbx_0 \sim \mc{N}(\sqrt{\alpha_t} \mbx_0, (1 - \alpha_t) \mbf{I})$. Then, 
    \begin{equation}
        p(\mbx_0 \mid \mbx_t) = \mc{N}(\tweedie, \bm{\Sigma}_{0 \mid t})
    \end{equation}
    where $\tweedie = \bm{\Gamma}_t \mbx_t$, $\bm{\Gamma}_t = \sqrt{\alpha_t} \bm{\Sigma} (\alpha_t \bm{\Sigma} + (1 - \alpha_t)\mbf{I})^{-1}$, and $\bm{\Sigma}_{0 \mid t} = \bm{\Sigma} - \sqrt{\alpha_t} \bm{\Gamma}_t \bm{\Sigma}$. 
\end{lemma}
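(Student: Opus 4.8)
The plan is to treat $(\mbx_0, \mbx_t)$ as jointly Gaussian and read off the conditional distribution $p(\mbx_0 \mid \mbx_t)$ from the standard Gaussian conditioning formulas. First I would note that since $\mbx_0 \sim \mc{N}(\mbf{0}, \bm{\Sigma})$ and $\mbx_t \mid \mbx_0 \sim \mc{N}(\sqrt{\alpha_t}\,\mbx_0, (1-\alpha_t)\mbf{I})$, the pair $(\mbx_0, \mbx_t)$ is jointly zero-mean Gaussian with blocks $\Cov(\mbx_0) = \bm{\Sigma}$, $\Cov(\mbx_0, \mbx_t) = \E[\mbx_0 \mbx_t^T] = \E[\mbx_0 (\sqrt{\alpha_t}\mbx_0 + \sqrt{1-\alpha_t}\,\epsilon)^T] = \sqrt{\alpha_t}\,\bm{\Sigma}$, and $\Cov(\mbx_t) = \alpha_t \bm{\Sigma} + (1-\alpha_t)\mbf{I}$ (the last from the law of total covariance, using independence of $\epsilon$ and $\mbx_0$).

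Next I would apply the Gaussian conditioning identity: for jointly Gaussian $(\mbf{u},\mbf{v})$ with the obvious block covariance, $\mbf{u}\mid\mbf{v} \sim \mc{N}(\bm{\Sigma}_{uv}\bm{\Sigma}_{vv}^{-1}\mbf{v},\ \bm{\Sigma}_{uu} - \bm{\Sigma}_{uv}\bm{\Sigma}_{vv}^{-1}\bm{\Sigma}_{vu})$. Plugging in $\mbf{u}=\mbx_0$, $\mbf{v}=\mbx_t$ gives the conditional mean $\E[\mbx_0\mid\mbx_t] = \sqrt{\alpha_t}\,\bm{\Sigma}(\alpha_t\bm{\Sigma} + (1-\alpha_t)\mbf{I})^{-1}\mbx_t = \bm{\Gamma}_t \mbx_t$, which matches the claimed expression for $\tweedie$, and conditional covariance $\bm{\Sigma} - \sqrt{\alpha_t}\,\bm{\Sigma}(\alpha_t\bm{\Sigma}+(1-\alpha_t)\mbf{I})^{-1}\sqrt{\alpha_t}\,\bm{\Sigma} = \bm{\Sigma} - \sqrt{\alpha_t}\,\bm{\Gamma}_t\bm{\Sigma} = \bm{\Sigma}_{0\mid t}$, as claimed. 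I would also cross-check consistency with Tweedie's formula: for a Gaussian, $\score{p(\mbx_t)} = -(\alpha_t\bm{\Sigma}+(1-\alpha_t)\mbf{I})^{-1}\mbx_t$, and substituting into the Tweedie identity $\E[\mbx_0\mid\mbx_t] = \tfrac{1}{\sqrt{\alpha_t}}(\mbx_t + (1-\alpha_t)\score{p(\mbx_t)})$ should reproduce $\bm{\Gamma}_t\mbx_t$ — a quick algebraic verification that the Woodbury-type rearrangement $\tfrac{1}{\sqrt{\alpha_t}}(\mbf{I} - (1-\alpha_t)(\alpha_t\bm{\Sigma}+(1-\alpha_t)\mbf{I})^{-1}) = \sqrt{\alpha_t}\bm{\Sigma}(\alpha_t\bm{\Sigma}+(1-\alpha_t)\mbf{I})^{-1}$ holds.

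There is no serious obstacle here — the result is an instance of textbook Gaussian conditioning, so the only thing to be careful about is bookkeeping: correctly computing the cross-covariance block via the reparametrization $\mbx_t = \sqrt{\alpha_t}\mbx_0 + \sqrt{1-\alpha_t}\,\epsilon$ with $\epsilon\perp\mbx_0$, and simplifying the Schur complement into the stated form $\bm{\Sigma} - \sqrt{\alpha_t}\bm{\Gamma}_t\bm{\Sigma}$. If one wants an alternative route that avoids inverting $\Cov(\mbx_t)$ explicitly, one can instead complete the square in the exponent of $p(\mbx_0\mid\mbx_t) \propto p(\mbx_t\mid\mbx_0)p(\mbx_0)$, collecting quadratic and linear terms in $\mbx_0$; this yields precision matrix $\bm{\Sigma}_{0\mid t}^{-1} = \bm{\Sigma}^{-1} + \tfrac{\alpha_t}{1-\alpha_t}\mbf{I}$ and mean $\bm{\Sigma}_{0\mid t}\cdot\tfrac{\sqrt{\alpha_t}}{1-\alpha_t}\mbx_t$, which one then reconciles with the stated forms using the matrix identity $(\bm{\Sigma}^{-1} + \tfrac{\alpha_t}{1-\alpha_t}\mbf{I})^{-1} = \bm{\Sigma} - \sqrt{\alpha_t}\bm{\Gamma}_t\bm{\Sigma}$. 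I would present the first (joint Gaussian) route as the main proof since it is cleanest.
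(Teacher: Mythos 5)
Your proposal is correct and follows essentially the same route as the paper: write $\mbx_t = \sqrt{\alpha_t}\mbx_0 + \sqrt{1-\alpha_t}\,\epsilon$ with $\epsilon$ independent of $\mbx_0$, form the joint zero-mean Gaussian with cross-covariance $\sqrt{\alpha_t}\bm{\Sigma}$ and marginal covariance $\alpha_t\bm{\Sigma} + (1-\alpha_t)\mbf{I}$, and read off the conditional mean and Schur-complement covariance, exactly as the paper does. The Tweedie cross-check and the complete-the-square alternative are fine but extra; no changes needed.
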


\begin{proof}
    Suppose $\mbx_0 \sim \mathcal{N}(\mbf{0}, \bm{\Sigma})$ and $\mbx_t \mid \mbx_0 \sim \mathcal{N}(\sqrt{\alpha_t} \mbx_0, (1 - \alpha_t) \mbf{I})$. This implies that
    \begin{equation}
        \mbx_t = \sqrt{\alpha_t} \mbx_0 + (1 - \alpha_t) \bm{\epsilon}
    \end{equation}
    where $\bm{\epsilon} \sim \mc{N}(\mbf{0}, \mbf{I})$. Therefore, we can write the joint distribution of $\mbx_0$ and $\mbx_t$ as a multivariate Gaussian \cite{bishop2006pattern}:
    \begin{equation}
    \begin{bmatrix} \mbx_0 \\ \mbx_t \end{bmatrix} \sim \mc{N} \left(\mbf{0}, \begin{bmatrix} \bm{\Sigma} & \sqrt{\alpha_t} \bm{\Sigma} \\ \sqrt{\alpha_t} \bm{\Sigma} & \alpha_t \bm{\Sigma} + (1 - \alpha_t) \mbf{I} \end{bmatrix} \right).
    \end{equation}
    Using properties of conditional Gaussian distributions, we have that $p(\mbx_0 \mid \mbx_t)$ is also a Gaussian distribution with conditional mean and covariance
    \begin{equation}
        \E[\mbx_0 \mid \mbx_t] = \sqrt{\alpha_t} \bm{\Sigma} \left(\alpha_t \bm{\Sigma} + (1 - \alpha_t) \mbf{I} \right)^{-1} \mbx_t
    \end{equation}
    and 
    \begin{equation}
        \Cov [\mbx_0 \mid \mbx_t] = \bm{\Sigma} - \sqrt{\alpha_t} \bm{\Sigma} \left(\alpha_t \bm{\Sigma} + (1 - \alpha_t) \mbf{I} \right)^{-1} \sqrt{\alpha_t} \bm{\Sigma}.
    \end{equation}
    Letting $\bm{\Gamma}_t = \sqrt{\alpha_t} \bm{\Sigma} (\alpha_t \bm{\Sigma} + (1 - \alpha_t)\mbf{I})^{-1}$, $\tweedie = \E[\mbx_0 \mid \mbx_t]$ and $\bm{\Sigma}_{0 \mid t} = \Cov[\mbx_0 \mid \mbx_t]$, we have shown the lemma.
\end{proof}

Next, we prove our main theoretical result, \Cref{thm:approx_gap}, restated below.

\begin{theorem} \label{thm:approx_gap_app}
Suppose $\mbx_0$ is drawn from $\mc{N}(\mbf{0}, \bm{\Sigma})$ and we are given linear measurements $\mby = \mc{A}(\mbx_0)+ \mbf{z}$, where $\mc{A}(\mbx_0) = \mbA \mbx_0$ and $\mbf{z} \sim \mc{N}(0, \sigma_y^2 I)$. Suppose that the intermediate value $\mbx_t$ of the continuous-time reverse diffusion from \Cref{eq:reverse_cond_sde} can be written as $\mbx_t = \sqrt{\bar{\alpha}_t} \mbx_0 + \sqrt{1 - \bar{\alpha}_t} \epsilon$ where $\epsilon \sim \mc{N}(\mbf{0}, \mbf{I})$.  Then, we have that the true noisy likelihood score $\score{p(\mby \mid \mbx_t)}$ is
     \begin{equation} \label{eq:true_cond_score_app}
         \!\!\!
        (\mbA \bm{\Gamma}_t)^T \!(\mbA \bm{\Sigma}_{0 \mid t} \mbA^T \!+ \sigma_y^2\bm{I})^{-1} ( \mby - \mbA \tweedie ),\!\!
     \end{equation}
where $\bm{\Gamma}_t = \sqrt{\bar{\alpha}_t} \bm{\Sigma} (\bar{\alpha}_t \bm{\Sigma} + (1 - \bar{\alpha}_t) \bm{I})^{-1}$, $\bm{\mu}_{0 \mid t} = \E[\mbx_0 \mid \mbx_t] = \bm{\Gamma}_t \mbx_t$ and $\bm{\Sigma}_{0 \mid t} = \Cov [\mbx_0 \mid \mbx_t] = \bm{\Sigma} - \sqrt{\bar{\alpha}_t} \bm{\Gamma}_t \bm{\Sigma}$. Moreover, the FGPS approximation $\score{p(\mbf{C}_t \mby \mid \tweedie)}$ can be analytically calculated as
     \begin{equation} \label{eq:our_cond_score_app}
         \left(\mbf{C}_t \mbA \bm{\Gamma}_t \right)^T (\sigma_y^{2} \mbf{C}_t \mbf{C}_t^T)^{-1} (\mbf{C}_t \mby - \mbf{C}_t \mbA \tweedie) .
     \end{equation}
\end{theorem}

\begin{proof}
    We denote $\overleftarrow{\mbx_t}$ as the iterate from the continuous-time reverse diffusion process such that $\overleftarrow{\mbx_t} = \sqrt{\bar{\alpha_t}} \mbx_0 + \sqrt{1 - \bar{\alpha_t}} \epsilon$ where $\epsilon \sim \mc{N}(\mbf{0}, \mbf{I})$. From Lemma \ref{lem:denoising_app}, we have that $p(\mbx_0 \mid \overleftarrow{\mbx_t}) = \mc{N}(\tweedie, \bm{\Sigma}_{0 \mid t})$ where $\tweedie = \bm{\Gamma}_t \overleftarrow{\mbx_t}$ and $\bm{\Sigma}_{0 \mid t} = \bm{\Sigma} - \sqrt{\alpha_t} \bm{\Gamma}_t \bm{\Sigma}$. Now, similar to the proof of Lemma \ref{lem:denoising_app}, since $p(\mby \mid \mbx_0) = \mc{N}(\mbA \mbx_0, \sigma_y^2 \mbf{I})$, we can also calculate $p(\mby \mid \overleftarrow{\mbx_t})$ in closed form as another Gaussian distribution. Specifically, first we can write the joint distribution of $\mbx_0$ and $\mby$ conditioned on $\overleftarrow{\mbx_t}$ as a multivariate Gaussian \cite{bishop2006pattern}:
    \begin{equation}
        \begin{bmatrix} \mbx_0 \\ \mby \end{bmatrix} \Big| \overleftarrow{\mbx_t} \sim \mc{N} \left( \begin{bmatrix} \tweedie \\ \mbA \tweedie \end{bmatrix}, \begin{bmatrix} \bm{\Sigma}_{0 \mid t} & \bm{\Sigma}_{0 \mid t} \mbA^T \\ \mbA \bm{\Sigma}_{0 \mid t} & \mbA \bm{\Sigma}_{0 \mid t} \mbA^T + \sigma_y^2 \mbf{I} \end{bmatrix} \right).
    \end{equation}
    Further, using properties of conditional Gaussian distributions, we have that $p(\mby \mid \overleftarrow{\mbx_t}) = \mc{N}(\mbA \tweedie, \mbA \bm{\Sigma}_{0 \mid t} \mbA^T + \sigma_y^2 \mbf{I})$. Let $\bm{\Delta}_t = \mby - \mbA \tweedie$. Then, computing the gradient of $p(\mby \mid \overleftarrow{\mbx_t})$ with respect to $\overleftarrow{\mbx_t}$, we have
    \begin{align}
        \!\!\ \grad_{\overleftarrow{\mbx_t}} \log p(\mby \mid \overleftarrow{\mbx_t}) &= \grad_{\overleftarrow{\mbx_t}} \log \mc{N}(\mbA \tweedie, \mbA \bm{\Sigma}_{0 \mid t} \mbA^T + \sigma_y^2 \mbf{I}) \\
        &= \grad_{\overleftarrow{x_t}} -0.5 \bm{\Delta}_t^T (\mbA \bm{\Sigma}_{0 \mid t} \mbA^T + \sigma_y^2 \mbf{I})^{-1} \bm{\Delta}_t  \\
        &= \left( \mbA \frac{\partial \tweedie}{\partial \overleftarrow{x_t}} \right)^T (\mbA \bm{\Sigma}_{0 \mid t} \mbA^T + \sigma_y^{2}\mbf{I})^{-1} \bm{\Delta}_t 
    \end{align}
    As $\tweedie = \bm{\Gamma}_t \overleftarrow{\mbx_t}$, we have that $\frac{\partial \tweedie}{\partial \overleftarrow{\mbx_t}} = \bm{\Gamma}_t$, which gives us \Cref{eq:true_cond_score_app}. The FGPS approximation to the true conditional score is $\grad_{\overleftarrow{\mbx_t}} \log p(\mby \mid \overleftarrow{\mbx_t}) \approx \grad_{\overleftarrow{\mbx_t}} \log p(\mbf{C}_t \mby \mid \tweedie) = \grad_{\overleftarrow{\mbx_t}} \mc{N}(\mbA \tweedie, \sigma_y^2 \mbf{I})$. Similar to above, we can also calculate its gradient $\grad_{\overleftarrow{\mbx_t}} \log p(\mbf{C}_t \mby \mid \tweedie)$ with respect to $\overleftarrow{\mbx_t}$ as
    \begin{equation} \label{eq:fgps_approx_app}
        \!\!\ \left(\mbf{C}_t \mbA \bm{\Gamma}_t \right)^T (\sigma_y^{2} \mbf{C}_t \mbf{C}_t^T)^{-1} (\mbf{C}_t \mby - \mbf{C}_t \mbA \tweedie).
    \end{equation}
\end{proof}

Corollary \ref{cor:dps_approx} can easily be proven by taking $\mbf{C}_t$ as the identity matrix in the above proof.

\begin{figure*}[htbp]
    \centering
    \includegraphics[width=0.6\textwidth]{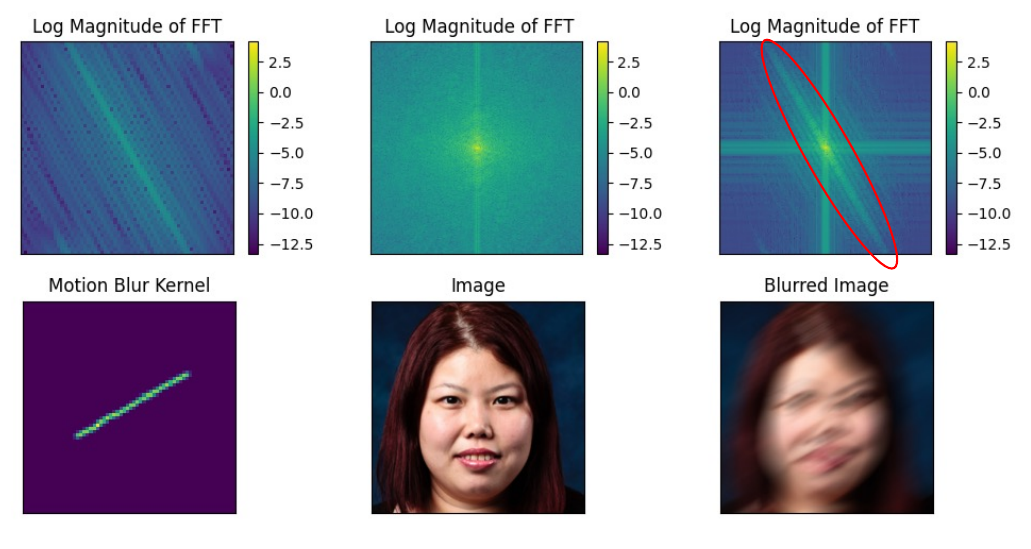}
    \caption{Directional motion blur can retain high frequency components in certain spatial directions orthogonal to the direction of the blur. The red circled direction shows an example of such a direction.}
    \label{fig:blur_fft}
\end{figure*}

\section{Theoretical Investigation of Approximation Gap of FGPS and DPS} \label{sec:app:theory}

Our main theorem shows in the multivariate Gaussian setting, the true conditional score differs from approximations in the term \( (\mbf{I} + \mbf{A} \bm{\Sigma}_{0 \mid t} \mbf{A}^\top)^{-1} \), which requires approximations in general. DPS approximates it at as the identity matrix. In this section, we show that in certain cases, FGPS is a significantly better approximation than the identity matrix. To see this, consider when $\mbA$ is a high-pass filter and the data covariance is $\bm{\Sigma}_f$ such that it follows a power law in the frequency domain. Then, the matrix $\mbA \bm{\Sigma}_{0 \mid t} \mbA^\top$ has significant energy in high-frequency directions. Specifically, in the Fourier basis, denoting the eigenvalues of $\mbA$, $\bm{\Sigma}_f$, $\bm{\Sigma}_{0 \mid t}$ as $a_i$, $\lambda_i$ and $\lambda_i^{t}$ respectively, the eigenvalues of  $\mbA \bm{\Sigma}_{0 \mid t} \mbA^\top$ take the form $a_i^2 \lambda_i^{t}$,  where $a_i^2$ grows with frequency (due to the high-pass nature of $\mbA$), and $\lambda_i^{t} = \frac{(1 - \alpha_t)\lambda_i}{\alpha_t \lambda_i + (1 - \alpha_t)}$. Although  $\lambda_i^{t}$ is small in high-frequency directions, the product $a_i^2 \lambda_i^{\text{post}}$ can still be $\mathcal{O}(1)$ or larger due to the amplification by $\mbA$. Consequently, the eigenvalues of $(\mbf{I} + \mbA \bm{\Sigma}_{0 \mid t} \mbA^\top)^{-1}$ in those directions, given by $\frac{1}{1 + a_i^2 \lambda_i^{\text{post}}}$, are significantly less than $1$, indicating strong suppression of high-frequency components especially when $\alpha_t$ is very small as in the beginning of the reverse process. In contrast, low-frequency directions (where $a_i^2 \approx 0$) are preserved. This shows that $(\mbf{I} + \mbA \bm{\Sigma}_{0 \mid t} \mbA^\top)^{-1}$ acts as a low-pass filter. The FGPS approximation, $\mbf{C}_t^\top (\mbf{C}_t \mbf{C}_t^\top)^{-1} \mbf{C}_t$, which is a projection matrix onto the low-frequency components, thus approximates this behavior more faithfully than the identity matrix, which uniformly preserves all directions.

\section{Further Motivations for our Method} \label{sec:app:further_motivation}

In \Cref{sec:limitations}, we argued that when the forward operator is convolution with a high-pass filter, existing methods have a large approximation between the conditional score and its approximation, which can lead to compromised sample quality in practice. High-pass filtering may seem like a contrived example, because after all, many inverse problem tasks considered in the literature have forward operators that are low-pass filters, such as Gaussian deblurring and superresolution tasks. Even though some natural imaging systems such as phase contrast microscopy, we argue that the high-pass filter effect can also show up in more complex image restoration tasks. For example, for motion deblurring, the forward operator can act as a high-pass filter in certain spatial directions of the image. In \Cref{fig:blur_fft}, we examine this effect by looking at the log magnitude of the frequency domain of an image convolved with a simple directional blur kernel. We can clearly see that in the red circled direction of the Fourier domain, the filter retains high frequency components of the original image. This is also mathematically evident using the Fourier convolution theorem. The Fourier convolution theorem states that

\begin{equation} \label{eq:fourier_conv}
    \mc{F}(k \circledast x) = \mc{F}(k) \cdot \mc{F}(x),
\end{equation}

where $\circledast$ denotes convolution and $\cdot$ denotes an element-wise product. Further, for a directional blur kernel, the Fourier transform in spatial directions has high frequency values in directions orthogonal to the direction of the blur. Thus, it is clear that in those directions, the motion blur will retain high frequency components. While the high-pass filter considered in \Cref{fig:approx_gap} was an extreme case of this, our analysis highlights a crucial deficiency of existing methods since high frequency components of the measurement can amplify approximation errors. 

Lastly, we emphasize that our experimental results demonstrate a fascinating phenomenon where the performance gap between FGPS and baseline methods is even larger on more complex datasets, hinting at the fact that besides the frequency characteristics of the forward operator, the frequency characteristics and quality of the Tweedie estimate also greatly affects reconstruction. It would be an interesting theoretical direction to understand why the frequency schedule has larger benefits in this case. 

\section{Experimental Details} \label{sec:app:experimental_details}

Below, we list the detailed setup for all experiments reported in the main paper. All the images for both FFHQ and ImageNet dataset are resized to $256 \times 256$, and we report results for 1000 images from the validation datasets for the FFHQ and Imagenet datasets. We will release our code upon publication.

\subsection{Inverse Problem Task Descriptions}

\subsubsection{Linear Inverse Problems}

For the first three image restoration problems we consider, the forward operators are linear operators defined as convolution with a given kernel. These kernels are all of size $61 \times 61$. 

\noindent \textbf{Gaussian Deblurring.} The forward operator is convolution with a Gaussian blur kernel with standard deviation $3$. 

\noindent \textbf{Motion Deblurring.} The forward operator is convolution with a motion blur kernel generated from code\footnote{\url{https://github.com/LeviBorodenko/motionblur}} with intensity $0.5$. 

\noindent \textbf{Deconvolution with High Pass Filter.} the forward operator is a high-pass filter implemented as a Dirac kernel minus a Gaussian blur kernel of standard deviation $5.0$. 

\subsubsection{Nonlinear Inverse Problems}

\noindent \textbf{Image Dehazing.} The fourth image restoration is a nonlinear inverse problem, image dehazing, where the forward operator is a hazing operator with strength $1$. The hazing operator models how light is scattered and attenuated in the atmosphere, based on the atmospheric scattering model
\begin{equation}
 \mc{A}(\mbx) = \mbx \cdot t(\mbx) + L(1-t(\mbx)),
\end{equation}
where $\mbx$ is the original clear image (scene radiance), $t(\mbx)$ is the transmission map representing the portion of light that reaches the camera, and $L$ is the atmospheric light value. The transmission map $t(\mbx)$ can be modeled using the Beer-Lambert law, which states that
\begin{equation}
t(\mbx) = e^{-\beta d(\mbx)}.
\end{equation}
Above, $\beta$ is the atmospheric scattering coefficient, and $d(\mbx)$ is the scene depth map. In our experiments, we set $L = 1, \beta = 1$, and set $d(\mbx)$ be Euclidean distance of each pixel to the center of the image. 

\subsection{Baselines}

\noindent \textbf{DPS} \cite{chung2022diffusion} We use the code from \url{https://github.com/DPS2022/diffusion-posterior-sampling} using the default hyperparameter settings for Gaussian deblurring and motion deblurring for both FFHQ and ImageNet. For high-pass filter operator, we used the same hyperparameters used for motion deblurring.

\noindent \textbf{MCG} \cite{chung2022improving} We use the code from \url{https://github.com/HJ-harry/MCG_diffusion} using the default hyperparameter settings for Gaussian deblurring and motion deblurring for both FFHQ and ImageNet. For high-pass filter operator, we used the same hyperparameters used for motion deblurring.

\noindent \textbf{DSG} \cite{yang2024guidance} We use the code from \url{https://github.com/LingxiaoYang2023/DSG2024} using the default hyperparameter settings for Gaussian deblurring for both FFHQ and ImageNet. The only hyperparameter change was that we set $interval = 10$ as we observed this worked better in practice. For high-pass filter deconvolution and motion deblurring operator, we used the same hyperparameters used for Gaussian deblurring.

\noindent \textbf{Score-SDE/ILVR} \cite{song2020score,choi2021ilvr} Generally, we group Score-SDE and ILVR as methods that use a sequence of noisy measurements to approximate the conditional score, as mentioned in \cite{daras2024survey} and \cite{chung2022diffusion}. This is a generalization of the methods in \cite{song2020score} and \cite{choi2021ilvr} as their methods only were presented for the inpainting and superresolution tasks. To consider general tasks, we sample a sequence of measurements $\mby_t \sim \mc{N}(\sqrt{\bar{\alpha_t}} \mby, (1 - \bar{\alpha_t} \mbf{I}))$. Then, we approximate the conditional score as $-\eta \grad_{\mbx_t} \norm{\mby_t - \mc{A}(\mbx_t)}{2}^2$. We use the step size $\eta$ from the DPS method \cite{chung2022diffusion}. 

\noindent \textbf{AOD-Net.} We use the code from \url{https://github.com/MayankSingal/PyTorch-Image-Dehazing} and train the network using the default parameters on 10000 images from the FFHQ training dataset. 

\noindent \textbf{DoubleDIP.} We use the code from \url{https://github.com/yossigandelsman/DoubleDIP} using all default parameters. This method is unsupervised and does not require any retraining.

\subsection{Details for \Cref{fig:approx_gap}}

In \Cref{fig:approx_gap}, we demonstrated the approximation gap of DPS and our method on synthetic data. Below, we describe the precise experimental details of our experiments.

\noindent \textbf{Forward Operators.} We consider a high-pass filter operators in our experiments. a low-pass filter and a high-pass filter. First, we create a Gaussian blur kernel of width $\sigma$ i.e. this kernel is $e^{-\mbx^2 / (2 \sigma^2)}$ and normalized to have sum $1$. For the high-pass filter, we simply subtract the Gaussian blur kernel of width $\sigma$ from a Dirac kernel, which has 1 at the center and zeroes elsewhere. The filter is a circulant matrix constructed from this kernel. Importantly, the first row of the circulant matrix corresponding to this kernel is normalized to have sum zero, such that the kernel is indeed a high-pass filter.

\noindent \textbf{Data Generation.} Next, we describe the data generation process. Recall from Lemma \ref{lem:wiener-khinchin} the construction of a covariance matrix $\bm{\Sigma}_f$ such that data drawn from $\mc{N}(\mbf{0}, \bm{\Sigma}_f)$ follows a power law in the frequency domain with parameters $c$ and $\beta$. We let $c = 1$ and $\beta = 2.5$, which is the typical range for natural image data \cite{van1996modelling}. We draw 10000 signals $\mbx^{(i)}$ of length 2000 from $\mc{N}(\mbf{0}, \bm{\Sigma}_f)$. For each signal, we generate a corresponding measurement $\mby^{(i)} = \mbA \mbx^{(i)} + \mbf{z}$, where $\mbf{z} \sim \mc{N}(\mbf{0}, \mbf{I})$, and $\mbA$ is the circulant matrix corresponding to the low-pass filter or the high-pass filter.

\noindent \textbf{Approximation Gap.} For each $\mbx^{(i)}, \mby^{(i)}$ pair, we calculate $\mbx_t^{(i)} \sim \mc{N}(\sqrt{\bar{\alpha_t}} \mbx^{(i)}, (1 - \bar{\alpha_t}) \mbf{I})$, where $\bar{\alpha_t}$ is the variance schedule from the DDPM paper \cite{ho2020denoising}. We report the average approximation gap over the 10000 signals. Below, we give the exact approximation that our method, FGPS, makes for the noisy likelihood score. From \Cref{thm:approx_gap}, we know that under a multivariate normal assumption on the data, the posterior mean $\tweedie$ is $\bm{\Gamma}_t \mbx_t$. Under the same assumptions as \Cref{thm:approx_gap}, for linear inverse problems where $\mc{A}(\mbx_0) = \mbA \mbx_0$ we analytically compute 
our approximation to the noisy likelihood score given in \Cref{eq:our_score_approx} as
\begin{equation} \label{eq:our_score_approx_full}
    (\mbf{C}_t \mbA)^T (\sigma_y^{2} \mbf{C}_t \mbf{C}_t^T)^{-1} (\mbf{C}_t \mby - \mbf{C}_t \mbA \tweedie).
\end{equation}

\noindent \textbf{Frequency Curriculum.} For our method, we have the choices of the frequency curriculum described through $\tau_T$ and $\tau_1$, the frequency cutoffs for the time-dependent low pass filter. For simplicity, we use a frequency curriculum from the variance schedule of the diffusion model. Specifically, let $\sigma_t = \frac{1}{\bar{\alpha_t}} - 1$, which corresponds to the corresponding variance-exploding parameterization of the SDE (see \cite{kawar2022denoising}, Appendix B). Then we set $\tau_t$ be the max frequency value $f_k$ such that $S(f_k) \geq \max\{\sigma_y^2, \sigma_t^2\}$, where $S(f_k)$ denotes the power spectral density from \Cref{eq:psd}. This utilizes the intuition that for $\mbx_t^{(i)}$, the frequency components of the original signal $\mbx^{(i)}$ still present in the image are the frequencies below $\tau_t$. 

\subsection{Implementation Details for Our Method}

In practice, we let $\tau_1$ and $\tau_T$ be hyperparameters and consider various schedules to interpolate between them. However, our theoretical results can guide us on setting them in a data-dependent way. Recall that natural images tend to follow a radially averaged power law in the frequency domain such that each non-zero frequency $f_k$ has power $S(f_k) \approx c |f_k|^{-\beta}$ for some constants $c, \beta > 0$. We employ three heuristics for setting our frequency schedule.

\begin{enumerate}
   \item (Setting $\tau_1$ to improve robustness to noise) When the measurement noise has variance $\sigma_y^2$, higher frequency components of $\mby$ are affected, specifically those frequencies $f$ such that $S(f) \leq \sigma_y^2$. Denoting $f_\text{{noise}}$ as the minimum frequency value that satisfies this constraint, we can set $\tau_1$ less than $f_\text{{noise}}$ and make our method robust to additive Gaussian noise of a known variance. 
   \item (Frequency Schedule as Function of Power Law Decay) For data that follows a faster decaying power law (when $\beta$ is larger), we can set a schedule for $\tau_t$ that also increases faster as a function of $t$. 
   \item (Setting $\tau_T$) For natural images, it is sufficient to set $\tau_T$ to be a small percentage of the overall frequency range, as this contains most of the information present in the measurement $\mby$. This is the same intuition behind JPEG compression. In our experiments on natural images, we take $\tau_T$ to be roughly 30\% of the overall frequency range. 
\end{enumerate}

To implement a frequency schedule, we use a binary mask applied in the frequency domain. Specifically, we utilize the Fast Fourier Transform (FFT) to transform the image data into the frequency domain. A low-pass filter mask is then created based on a specified cutoff frequency $\tau_t$. We utilize the Euclidean distance from the center of the frequency domain to create the low-pass filter mask, which selectively retains frequencies below the cutoff value. As the reverse process progresses, the cutoff is adjusted to allow more high-frequency components, thereby refining the image details. This method efficiently integrates frequency control into the reverse diffusion process, contributing to improved image restoration. In our work, we consider two frequency schedules, an exponential schedule and a linear schedule, which interpolate between $\tau_T$ and $\tau_1$ in different ways. Precisely, the exponential schedule follows 
\begin{equation}
    \tau_t = \tau_1 - (\tau_1 - \tau_T) \exp\left(-\frac{5t}{T}\right),
\end{equation}
and the linear schedule follows
\begin{equation}
    \tau_t = \tau_T + \frac{t}{T}(\tau_1 - \tau_T).
\end{equation}
Further, we set the step size $\mbf{S}_t$ to be $\mbf{S}_t = \frac{\kappa_t}{\norm{\phi_t(\mby) - \phi(\mc{A}(\tweedie))}{2}} \mbf{I}$ for a scalar time-dependent hyperparameter $\kappa_t$. We set this schedule to smoothly transition from $\kappa_T$ at the beginning of the reverse process to $\kappa_1$ at the end. Precisely, this schedule follows
\begin{equation} \label{eq:kappa_schedule}
    \kappa_t = \frac{1}{2} (\kappa_T + \kappa_1) + \frac{1}{2}(\kappa_T - \kappa_1) \cos \left( \frac{\pi t}{T} \right),
\end{equation}
where $t$ denotes the current time step, and $T$ is the total number of time steps (e.g. 1000 in our case).

Next, we detail all the hyperparameter settings used for our experiments on the FFHQ and ImageNet datasets.
 
    \begin{itemize}
        \item FFHQ
        \begin{itemize}
            \item Gaussian Deblurring:
            \begin{itemize}
                \item Step Size: $\kappa_T=3.0, \kappa_1=0.6$
                \item Frequency Schedule: Exponential
            \end{itemize}
            \item Motion Deblurring:
            \begin{itemize}
                \item Step Size: $\kappa_T=5.0, \kappa_1=1.0$
                \item Frequency Schedule: Exponential
            \end{itemize}
            \item High pass:
            \begin{itemize}
                \item Step Size: $\kappa_T=5.1, \kappa_1=1.1$
                \item Frequency Schedule: Linear
            \end{itemize}
            \item Haze:
            \begin{itemize}
                \item Step Size: $\kappa_T=5.0, \kappa_1=1.0$
                \item Frequency Schedule: Exponential
            \end{itemize}
        \end{itemize}
        
        \item ImageNet
        \begin{itemize}
            \item Gaussian Deblurring:
            \begin{itemize}
                \item Step Size: $\kappa_T=2.0, \kappa_1=0.01$
                \item Frequency Schedule: Linear
            \end{itemize}
            \item Motion Deblurring:
            \begin{itemize}
                \item Step Size: $\kappa_T=3.0, \kappa_1=0.1$
                \item Frequency Schedule: Linear
            \end{itemize}
            \item High pass:
            \begin{itemize}
                \item Step Size: $\kappa_T=3.5, \kappa_1=0.6$
                \item Frequency Schedule: Linear
            \end{itemize}
        \end{itemize}
    \end{itemize}

\noindent In Appendix \ref{sec:app:further_exp}, we provide ablation studies to understand the effects of these hyperparameters on the generation quality. 

\subsection{Computational Overhead of Our Method}
In this section, we evaluate the computational efficiency of our method by measuring the average runtime per image on the FFHQ dataset. We run each method from \Cref{tab:linear_inv} on 100 images, and report the average time taken per image, given in Table~\ref{tab:runtime}. We see that our method introduces minimal computational overhead compared to the DPS method, demonstrating that our modifications can be implemented in an efficient manner and easily integrated into existing frameworks.

\begin{table}[H]
\centering
\begin{tabular}{|l|c|}
\hline
\textbf{Method} & \textbf{Time (seconds)} \\
\hline
DSG & 45.13744 \\
Score-SDE/ILVR & 41.411510 \\
MCG & 93.20647 \\
DPS & 90.641704 \\
FPGS (Ours) & 92.383272 \\
\hline
\end{tabular}
\caption{Average Wall Clock Runtime Per Image (seconds) on FFHQ Dataset}
\label{tab:runtime}
\end{table}

\section{Comparison to ILVR/Score-SDE} \label{sec:app:ilvr}

Our method is closely related to the Score-SDE and ILVR works that consider a sequence of noisy measurements $y_t$ such that $y_t = \mc{N}(\sqrt{\alpha_t} y_0, (1 - \alpha) \mbf{I})$ \cite{choi2021ilvr, song2020score}. These methods consider approximations to the noisy likelihood score that are typically of the form $\mbf{L}_\mc{A} (\mby_t - \mc{A}(\mbx_t))$, where $\mbf{L}_\mc{A}$ is a fixed matrix that depends on the measurement operator $\mc{A}$. There are three crucial differences between these methods and our method. First, our approximation to the noisy likelihood score considers a time-varying model likelihood at each diffusion timestep $p(\mby \mid \tweedie)$ as given in \Cref{eq:new_likelihood} as opposed to simply varying $\mby_t$. Further, we leverage the powerful denoising capabilities of the pretrained diffusion model by using the Tweedie estimate $\tweedie$ instead of the noisy $\mbx_t$. Lastly, we propose a frequency curriculum that can differ from the variance schedule of the diffusion model and be adapted to the frequency characteristics of the data. These differences lead to a drastically improved performance of our method compared to Score-SDE/ILVR as seen in \Cref{tab:linear_inv}. 

\section{Further Experiments and Ablation Studies} \label{sec:app:further_exp}

\subsection{Visualizing the Transformed Measurements}

In \Cref{fig:cond_meas_app}, we show the transformed measurements at three timesteps in the reverse diffusion process when applying our frequency curriculum on the FFHQ dataset on all the forward operators we considered. Our theoretical results indicate that the initial reverse process steps are very important in order to obtain coarse alignment with the given measurement, so it is reasonable to align the measurements with this coarse-to-fine strategy.

\begin{figure*}[ht]
\centering
\includegraphics[width=0.7\textwidth]{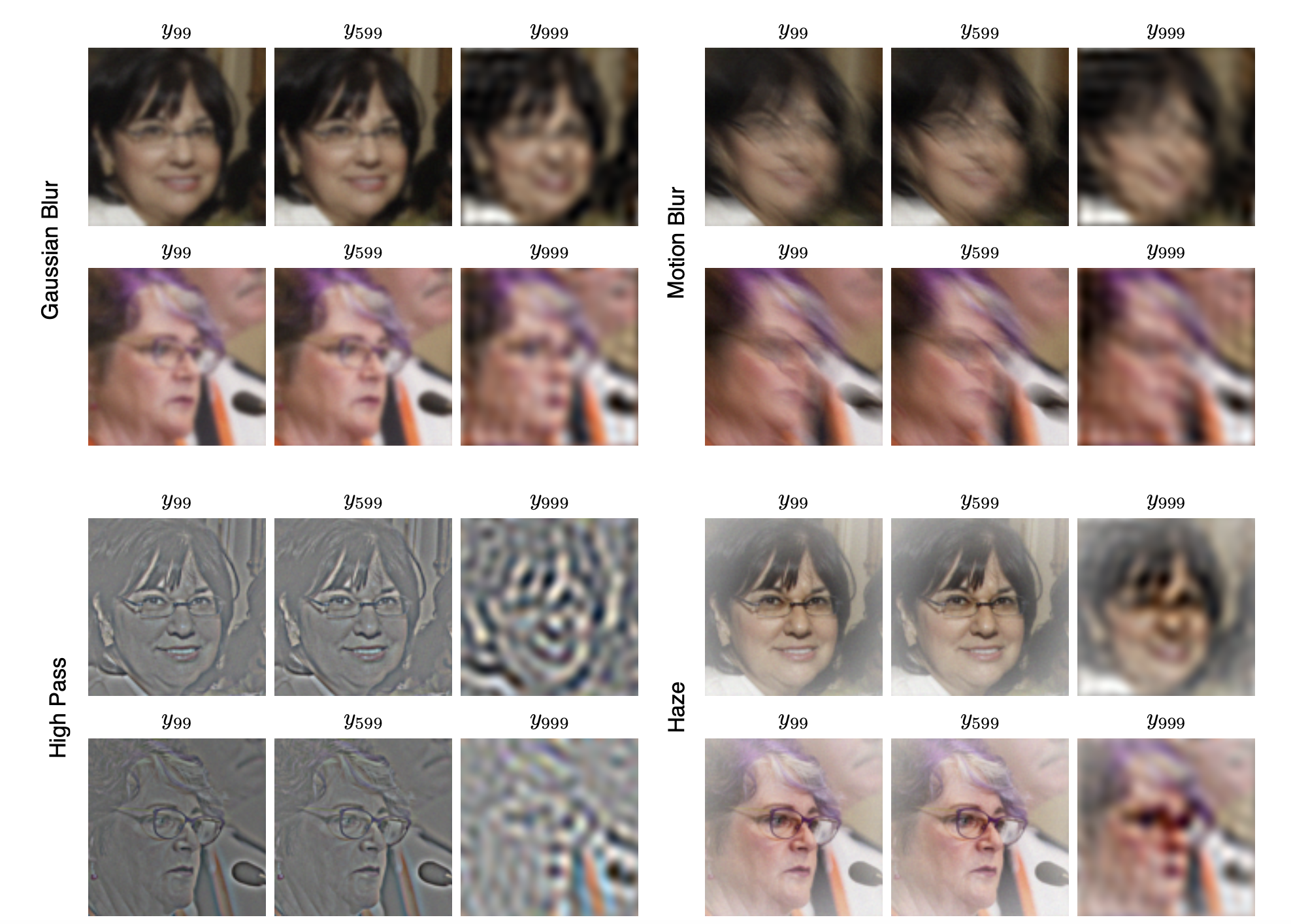} 
\caption{We demonstrate the visual effects of our frequency curriculum on the transformed measurements. In the early stages of the reverse process, only very coarse features of the measurement are retained.}
\label{fig:cond_meas_app}
\end{figure*}

\begin{figure*}[htbp]
\centering
\includegraphics[width=0.7\textwidth]{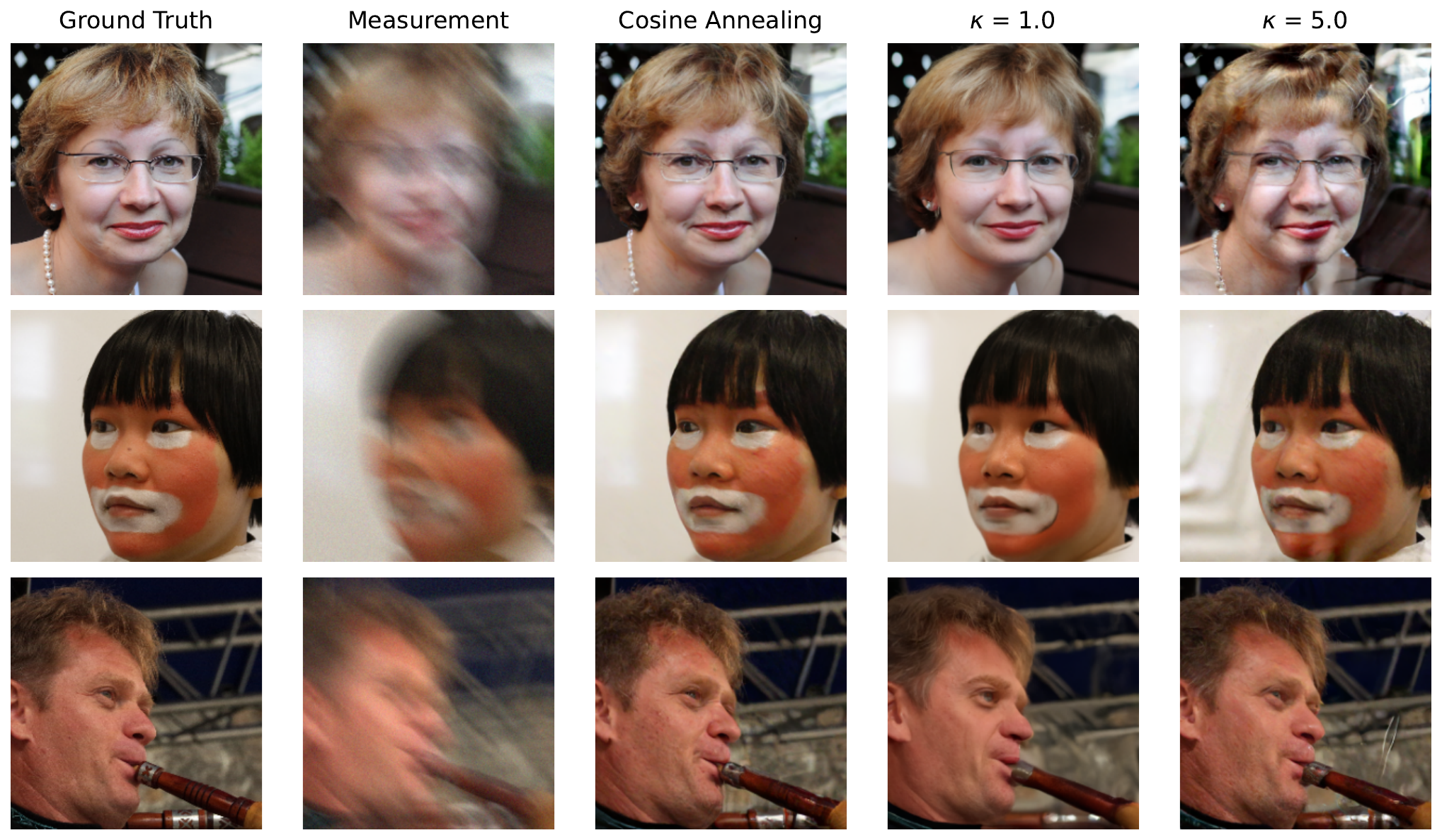} 
\caption{Qualitative comparison of different scale schedules on the FFHQ dataset. The first row highlights how the cosine schedule accurately captures details, such as the necklace, while the fixed small step size ($\kappa=1$) fails and the fixed large step size ($\kappa=5$) introduces artifacts.}
\label{fig:scale_schedule_image}
\end{figure*}

\subsection{Effect of Frequency Curriculum} \label{sec:app:freq_curr}

\noindent \textbf{Impact of Time-Dependent Curriculum.} As observed in \Cref{fig:ablation_kt}, we observe that the time-dependent frequency curriculum helps the stability of the method. Namely, when the operator is a high pass filter, the time-dependent frequency curriculum results in high-quality reconstructions. On the other hand, for a fixed curriculum, we observe that over different generations, most random samples of the diffusion model fall off the natural image manifold in the early steps of the reverse process, which results in unnatural looking images as in \Cref{fig:ablation_kt}. 

Next, to assess the impact of different frequency curricula on image quality, we provide a visual comparison using images generated with linear and exponential time-dependent curricula on the FFHQ and ImageNet datasets.

\noindent \textbf{FFHQ Motion Deblurring.} The images produced with the exponential frequency schedule exhibit superior quality, as shown in Figure~\ref{fig:ffhq_freq}. The exponential increase in high-frequency components helps to better capture fine facial features and intricate details, resulting in more visually appealing images. 

\noindent \textbf{ImageNet Motion Deblurring.} The ImageNet dataset benefits more from the linear frequency schedule, as illustrated in Figure~\ref{fig:imagenet_freq}. We hypothesize this is because the diversity and detail of ImageNet scenes require more careful consideration at many frequency ranges to obtain a high-quality reconstruction.

\noindent \textbf{Motivation for Frequency Schedule Selection.} Our observations indicate that the choice of frequency schedule should be informed by the characteristics of the underlying data. The exponential schedule is well-suited for datasets with structured and hierarchical features, such as faces, where it is crucial to refine high-frequency details over an extended number of reverse process time steps. In contrast, datasets like ImageNet, which contain diverse and complex textures, benefit from a more uniform and gradual frequency progression, ensuring consistent detail incorporation throughout the sampling process. We note though that while our method offers the flexibility to incorporate different frequency curricula, it is not a necessity to obtain high-quality reconstructions, and even a simple exponential curriculum can already obtain results that are significantly better than baseline methods. We simply highlight that this can be further improved by carefully considering the frequency distribution of the data.

\begin{figure*}[hbp]
\centering
\includegraphics[width=0.8\textwidth]{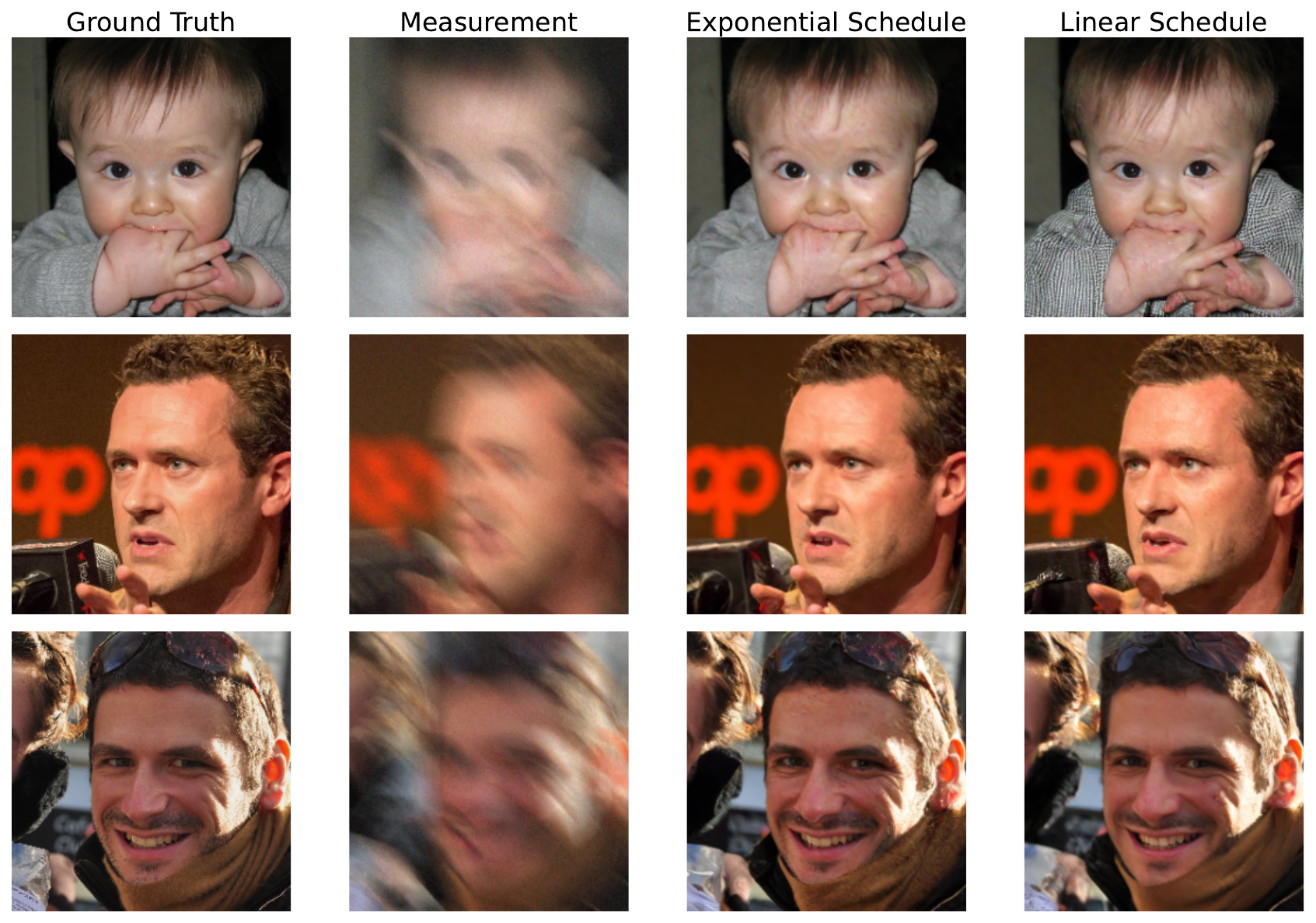}
\caption{ Comparison of images on the FFHQ dataset using exponential and linear frequency schedules. The exponential schedule produces higher-quality images with refined details and realistic textures compared to the linear schedule.}
\label{fig:ffhq_freq}
\end{figure*}

\begin{figure*}[htbp]
\centering
\includegraphics[width=0.8\textwidth]{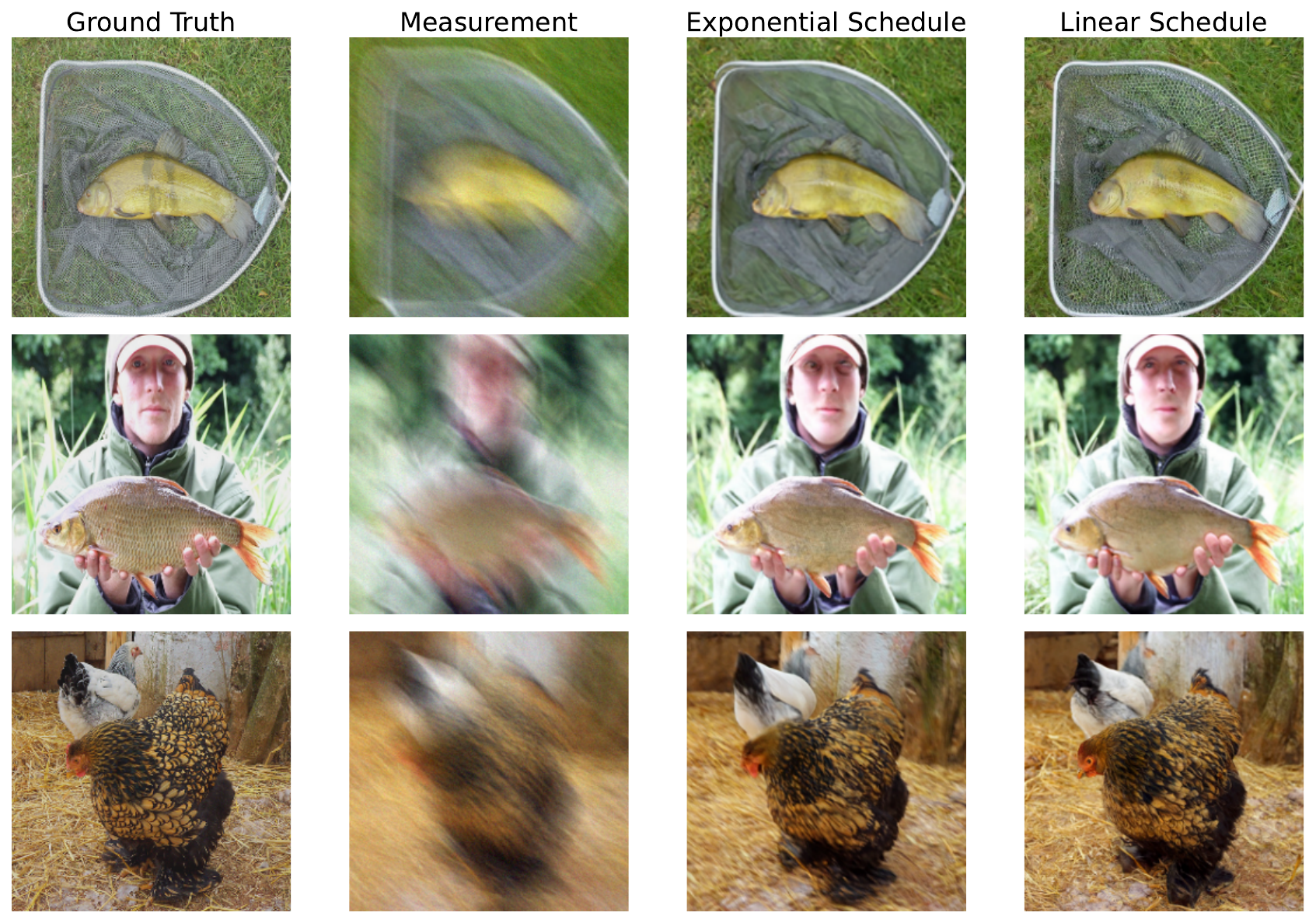} 
\caption{Comparison of images on the ImageNet dataset using exponential and linear frequency schedules. The linear schedule preserves object shapes and structural details better, resulting in clearer images.}
\label{fig:imagenet_freq}
\end{figure*}

\subsection{Effect of Step Size Schedule}
Recall that in \Cref{eq:our_score_approx}, we had that our approximation to the score should be scaled by $\mbf{S}_t$, which can be thought of as a step size to the gradient terrm. In practice, instead of using this exact quantity, which would require using a potentially large circulant matrix and its inverse, we set $\mbf{S}_t = \frac{\kappa}{\norm{\phi_t(\mby) - \phi(\mc{A}(\tweedie))}{2}} \mbf{I}$ for a scalar hyperparameter $\kappa$, which works well in practice. This is similar to the step size chosen in DPS. The intuition behind this step size choice is to control the approximation error. For example, when the approximation error is high, the step size should be smaller as the conditional score is noisy. These schedules play a crucial role in the effectiveness and stability of the optimization process. In this section, we study the effect of different step size schedules to motivate our choice. Specifically, we study three step size schedules (where we vary $\kappa$) on the motion deblurring task on the FFHQ dataset. These three schedules are: 

\begin{enumerate}
    \item Fixed small step size (used by DPS)
    \item Fixed large step size
    \item Cosine Annealed step size according to \Cref{eq:kappa_schedule}. This schedule starts with a relatively high step size, which facilitates rapid progress in the initial stages, and gradually decreases to smaller step sizes as the optimization nears convergence. 
\end{enumerate}

In \Cref{tab:scale_schedule} and \Cref{fig:scale_schedule_image}, we demonstrate the quantitative and qualitative differences between the three schedules. We clearly see that the cosine annealed step size strongly outperforms the other two schedules, which is why we adopt it for our experiments. We hypothesize that the cosine schedule allows the model to quickly capture the coarse structure of the image early on, while the progressively smaller steps enable fine-tuning of details, leading to a more refined final reconstruction. This is evident from \Cref{fig:scale_schedule_image}. For example, in the first row, the cosine schedule successfully captures the necklace detail around the subject's neck, while the fixed small step size schedule completely misses this feature, resulting in a blurred and oversmoothed output. The fixed large step size schedule manages to retain some necklace details but introduces noticeable artifacts, which degrade the overall image quality. The large step size likely destabilizes the optimization and pushes the image off the natural image manifold, which compromises image quality.

\begin{table}[H]
\centering
\small
\begin{tabular}{|l|c|c|c|c|}
\hline
\textbf{Step Size Schedule} & \textbf{FID$\downarrow$} & \textbf{LPIPS$\downarrow$} & \textbf{PSNR$\uparrow$} & \textbf{SSIM$\uparrow$} \\
\hline
Cosine Annealing & \textbf{49.66} & \textbf{0.1254} &\textbf{ 25.53}  & \textbf{0.724}  \\
Fixed Lower Bound & 70.73 & 0.1790& 24.22  & 0.677 \\
Fixed Upper Bound & 59.94 & 0.1559 & 24.22 & 0.686\\
\hline
\end{tabular}
\caption{Performance comparison of different scale schedules on the FFHQ dataset. Metrics include FID, LPIPS, PSNR, and SSIM.}
\label{tab:scale_schedule}
\end{table}

\subsection{Further Qualitative Results}

\noindent \textbf{Comparison to Baselines.}  In \Cref{fig:main_results_zoom}, we see the zoomed in version from \Cref{fig:main_results}, highlighting the intricate details of the image captured by our reconstructions. In \Cref{fig:motion_baseline_grid}, we see a comprehensive comparison to all the baselines reported in \Cref{tab:linear_inv} on the motion blurring task on the FFHQ dataset. We observe that Score-SDE/ILVR often captures the correct structure of the image but fills in different facial features. MCG usually overfits to the Gaussian measurement noise as reported in \cite{chung2022diffusion}. DPS results in good quality reconstructions, but the images are usually smoothed out and lose small facial features. DSG performs very well on the motion deblurring task, as shown also in \Cref{tab:linear_inv}. However, DSG often gives grainy reconstructions and artifacts that are clearly visible. In contrast, our method is significantly more stable and is able to give images that have higher perceptual quality than all the baselines.

\noindent \textbf{FFHQ Additional Results.} Figures \ref{fig:motion_grid} and \ref{fig:haze_grid} show additional results of our method on the motion deblurring and image dehazing task on the FFHQ dataset. 

\noindent \textbf{ImageNet Additional Results.} Figures \ref{fig:gaussian_imagenet_grid}, \ref{fig:motion_imagenet_grid}, and \ref{fig:high_pass_imagenet_grid} show additional results of our method on the Gaussian deblurring, motion deblurring, and high-pass filter deconvolution tasks on the ImageNet dataset. On the high-pass filter task, we note that there is sometimes a color shift which results from the loss of color information in the measurement. We observe that compared to DPS and a fixed-time frequency curriculum (as in \Cref{fig:ablation_kt}), our method is much more stable and usually gives visually plausible reconstructions instead of strong color artifacts that dominate the reconstruction. 

\begin{figure*}[htb]
    \centering
    \includegraphics[width=\textwidth]{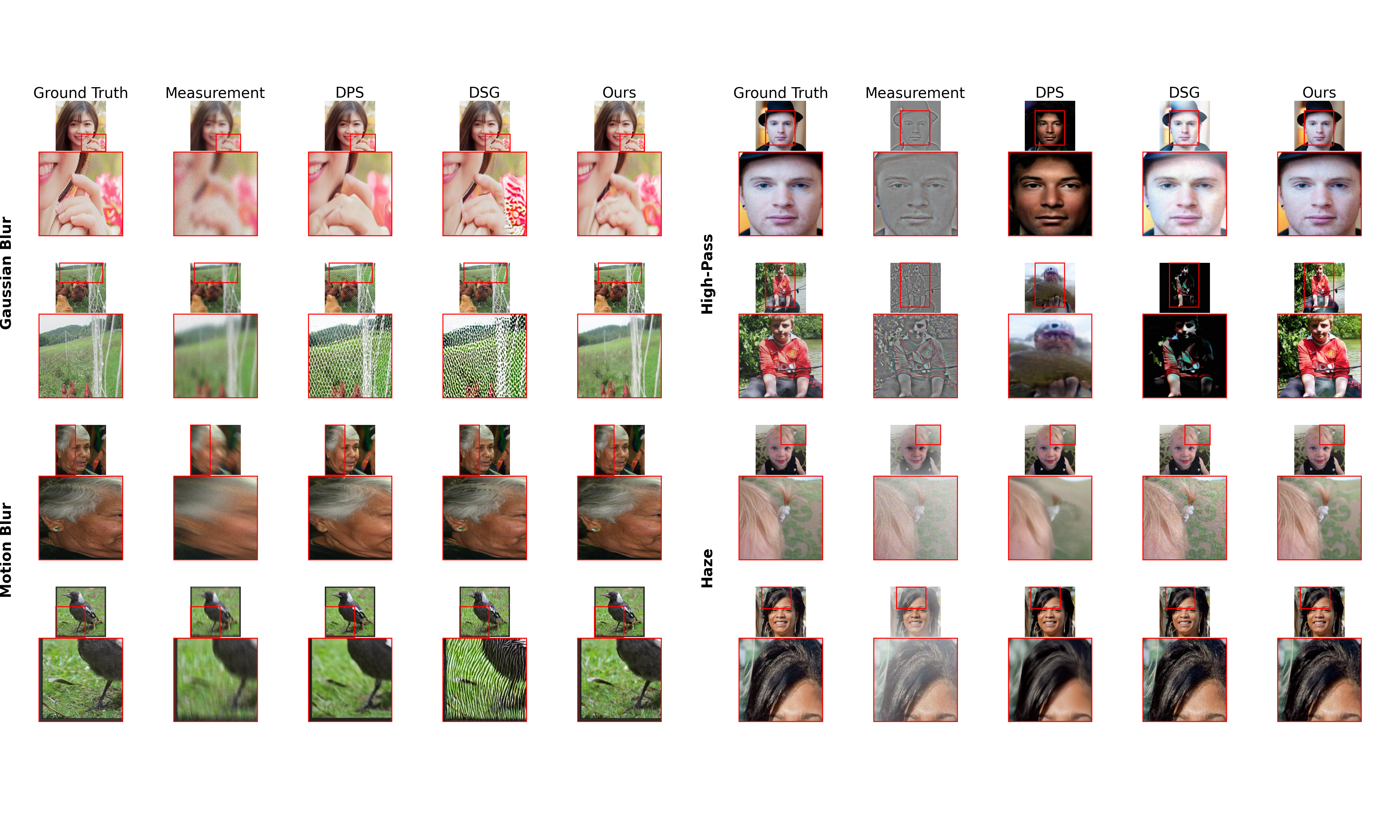}
    \caption{Qualitative results of our method with zoomed in portions of images from \Cref{fig:main_results}. Our method successfully preserves finer details like background pattern.}
    \label{fig:main_results_zoom}
\end{figure*}

\begin{figure*}[htbp]
    \centering
    \includegraphics[width=\textwidth]{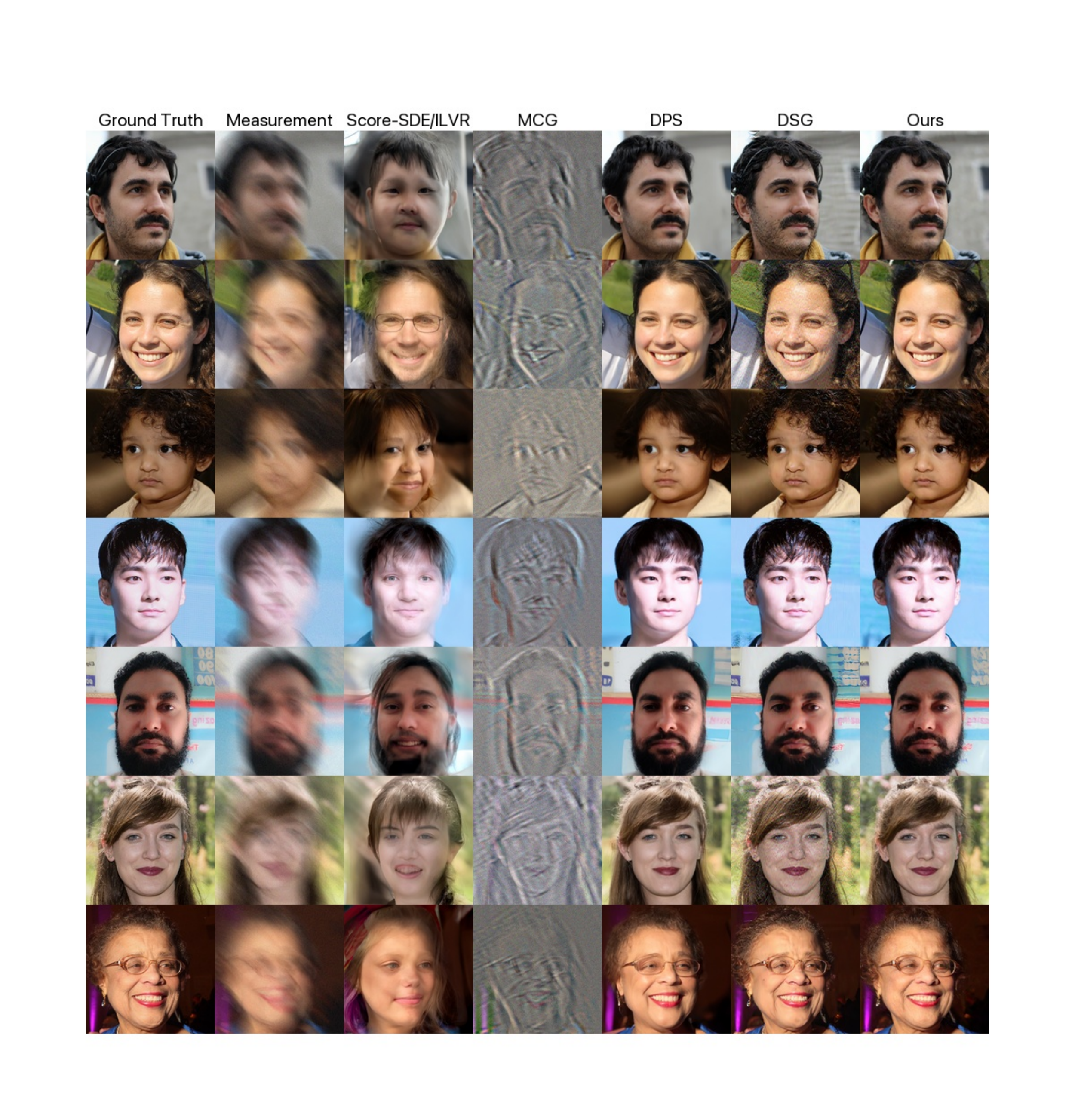}
    \vspace{-10mm}
    \caption{Qualitative motion deblurring results on FFHQ dataset for all baselines we report in \Cref{tab:linear_inv}. The same blur kernel is applied to each image.}
    \label{fig:motion_baseline_grid}
\end{figure*}

\begin{figure*}[htbp]
    \centering
    \includegraphics[width=\textwidth]{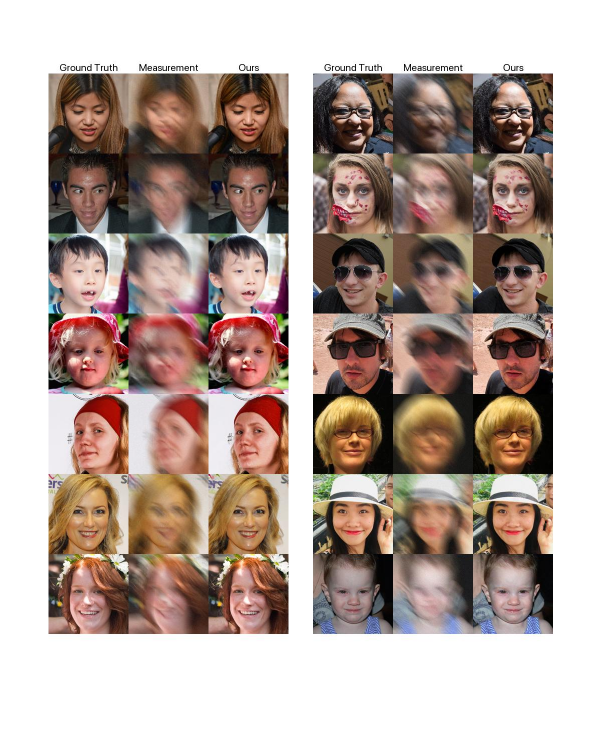}
    \vspace{-25mm}
    \caption{Qualitative motion deblurring results on FFHQ dataset. The same blur kernel is applied to each image.}
    \label{fig:motion_grid}
\end{figure*}

\begin{figure*}[htbp]
    \centering
    \includegraphics[width=\textwidth]{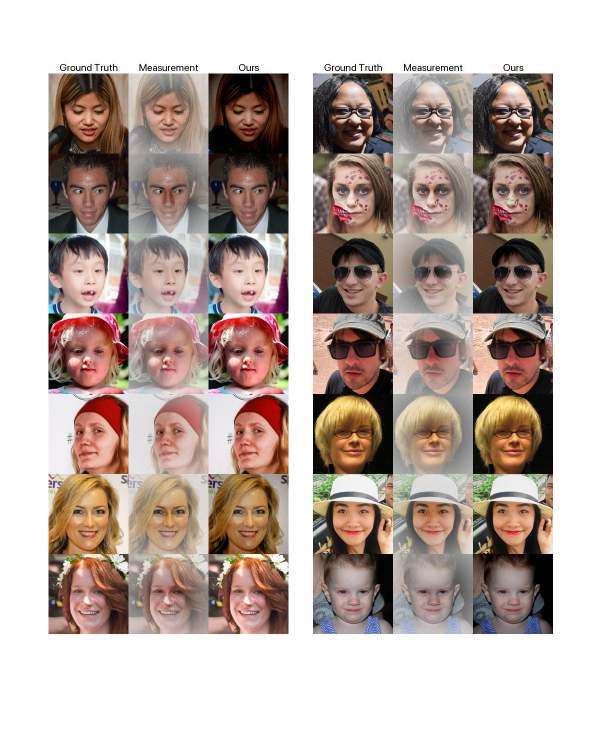}
    \vspace{-25mm}
    \caption{Qualitative image dehazing results on FFHQ dataset.}
    \label{fig:haze_grid}
\end{figure*}

\begin{figure*}[htbp]
    \centering
    \includegraphics[width=\textwidth]{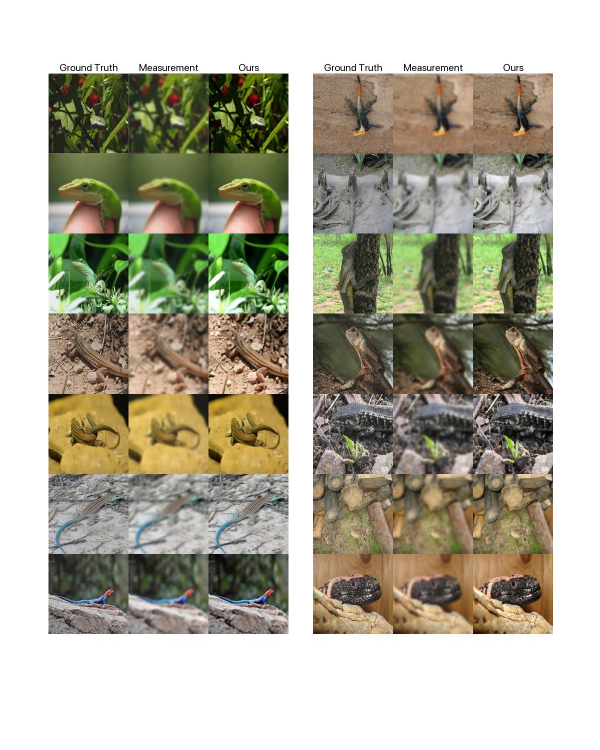}
    \vspace{-25mm}
    \caption{Qualitative Gaussian deblurring results on Imagenet dataset.}
    \label{fig:gaussian_imagenet_grid}
\end{figure*}

\begin{figure*}[htbp]
    \centering
    \includegraphics[width=\textwidth]{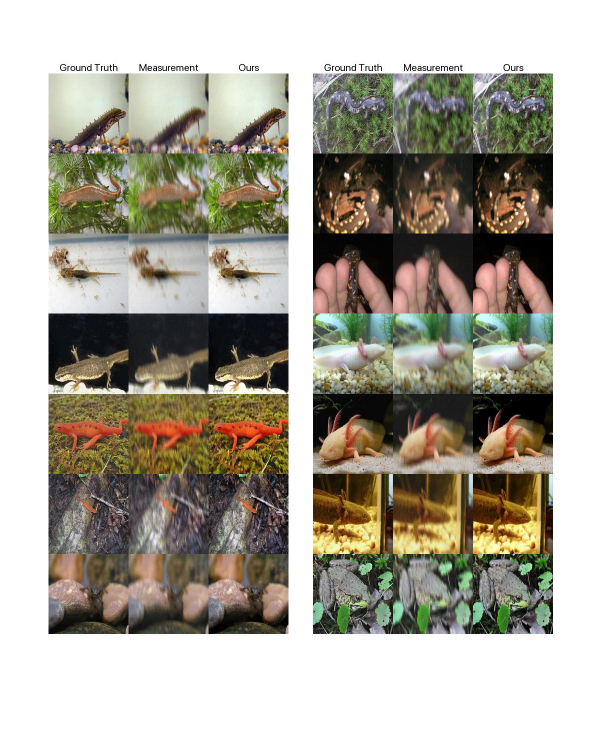}
    \vspace{-25mm}
    \caption{Qualitative motion deblurring results on Imagenet dataset. The same blur kernel is applied to each image.}
    \label{fig:motion_imagenet_grid}
\end{figure*}

\begin{figure*}[htbp]
    \centering
    \includegraphics[width=\textwidth]{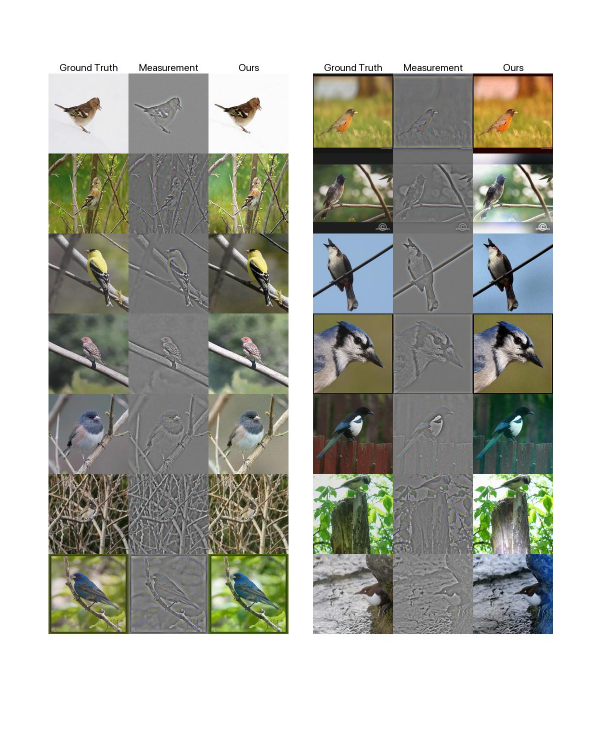}
    \vspace{-25mm}
    \caption{Qualitative results on Imagenet dataset when the measurement is a high-pass filter applied to the ground truth image.}
    \label{fig:high_pass_imagenet_grid}
\end{figure*}

\section{Limitations and Future Work}

While FGPS requires little frequency schedule tuning, the step size still plays a large role in dictating image quality and needs to be carefully tuned. Further, the frequency curriculum is applicable only for image restoration tasks where the measurement is still an image. Lastly, similar to DPS, FGPS requires knowledge of the forward operator during the reverse process, which restricts it to non-blind inverse problems. 

It is important to note that our theoretical results do not paint the full picture of the success of FGPS in practice. Empirically, the Tweedie estimate $\tweedie$ used for the conditional score approximation behaves in complex ways and its frequency structure is not as simple as the form in our theoretical results, $\bm{\Gamma}_t \mbx_0$. We conjecture it is still important to explicitly align the frequency structure of the unconditional score and noisy likelihood score, which is why FGPS outperforms DPS for low-frequency measurements on FFHQ and Imagenet. In addition to the role of the Tweedie estimate, we find an intriguing role of the dataset where FGPS performs even better on harder datasets like ImageNet due to more complex frequency structure. Explaining both these phenomena theoretically is an interesting direction for future work.  That being said, our empirical findings demonstrate that the core idea behind FGPS, aligning the spectral structure of the measurement with the score function, remains effective for complex data. Our curriculum strategy reflects a coarse-to-fine alignment of frequencies, motivated by both the empirical behavior of diffusion models and spectral properties of natural images. We believe this insight opens avenues for more principled guidance mechanisms utilizing the structure of the score function.

Future work would include a rigorous analysis of the step size and how it affects the approximation error. It would also be useful to consider several competing works and their introduced approximation errors using our theoretical analysis as a backbone. Lastly, we hope to extend FGPS to other inverse problems, both blind and non-blind, where the measurement is not an image such as medical imaging tasks.

\end{document}